\newtheorem{theorem}{Theorem}[section]
\newtheorem{lemma}[theorem]{Lemma}
\newtheorem{proposition}[theorem]{Proposition}
\newtheorem{corollary}[theorem]{Corollary}
\newcommand{\calX}{\mathcal{X}}
\newcommand{\calT}{\mathcal{T}}
\begin{document}
%
\begin{frontmatter}

\title{Can Multiple Phylogenetic Trees Be Displayed in a Tree-Child Network Simultaneously?}
\author[label1]{Yufeng Wu}
\address[label1]{Department of Computer Science and Engineering, University of Connecticut, Storrs, CT 06269, USA}
\ead{yufeng.wu@uconn.edu}

\author[label2]{Louxin Zhang}
\address[label2]{Department of Mathematics and Center for Data Science and Machine Learning,\\ National University of Singapore, Singapore 119076\fnref{label4}}
\ead{matzlx@nus.edu.sg}

\begin{abstract}
 A binary phylogenetic network on a taxon set $X$ is a rooted acyclic digraph in which the degree of each nonleaf node is three and  its leaves (i.e.~degree-one nodes) are uniquely labeled with the taxa of $X$.  It is tree-child if each nonleaf node has at least one child of indegree one. A set of binary phylogenetic trees may or may not be simultaneously displayed in a binary tree-child network. Necessary conditions for multiple phylogenetic trees being simultaneously displayed in a tree-child network are given here. In particular, it is proved that any two phylogenetic trees can always simultaneously be displayed in some tree-child network on the same taxa set.  It is also proved that any set of multiple binary phylogenetic trees can always simultaneously be displayed in some non-binary tree-child network  on the same taxa set, where each nonleaf node is of either indegree one and outdegree two or indegree at least two and outdegree out.
\end{abstract}
\end{frontmatter}

\section{Introduction}
Phylogenetic trees have been used to model the evolutionary history of species for several hundred years \cite{Fel04Book2}. Recent genetic and genomic studies suggest that 
horizontal evolutionary events have played more important roles  than we expected in genome evolution \cite{Fontaine_15,Marcussen_14,Moran_10}. As such, phylogenetic networks, which are rooted acyclic digraphs, have been used more and more to model the evolutionary history of genomes with the presence of recombination, hybridization and other reticulate events \cite{Gusfield_book,Huson_book}.  

Given that the space of phylogenetic networks is huge, it is extremely hard to design fast exact algorithms to reconstruct phylogenetic networks from bio-molecular sequence data \cite{elworth2019}. Since each phylogenetic network can be considered as the consensus of multiple phylogenetic trees, one approach is to infer a phylogenetic network from phylogenetic trees. That is, given a set of multiple phylogenetic trees, to infer a phylogenetic network that displays the given trees simultaneously. 
 Mathematical relation between tree and networks have extensively been studied in recent years (\cite{Steel_book,Zhang_18} for a survey).  



The network inference problem has been extensively studied by researcher in the bioinformatics community. There are two types of approaches for this problem: unconstrained network construction and constrained network construction. Unconstrained network construction \cite{WURN10,ZW2012,WURNJCB13,WUMIRRN16} attempts to construct a network without additional topological constraints. The downsides of these approaches include that the constructed network contain a unreasonable number of reticulate events and that the computer programs are slow even for trees on a few dozen taxa. Constrained network reconstruction imposes some kind of topological constraints, which can simplify the network structure and often lead to more efficient algorithms. There are various kinds of constraints studied in the literature. One constraint is requiring simplified cycle structure in networks \cite{Gusfield_04,Wang_01}.
Another is the so-called tree-child property \cite{Cardona_09b}.

A binary phylogenetic network is tree-child if every nonleaf node has at least one child that is of indegree one, which implies that every nonleaf node is connected to some leaf through a series of tree edges. 
Although tree-child networks can have complex structure, they can efficiently be enumerated and counted  by a simple recurrence formula \cite{pons2021_SR,Zhang_19}. In addition, the problem of determining whether or not a tree is displayed in a phylogenetic network is NP-complete \cite{Kanji_08} in general but becomes linear time solvable for tree-child networks \cite{Gunawan_16_IC,van2010_IPL} and so is whether or not a phylogenetic network is displayed in another \cite{janssen}.  A parametric algorithm has also been developed for determining whether a set of multiple trees can be display in a tree-child network simultaneously  \cite{van2022practical}.  As an exponential-time algorithm,  it may become slow for large data.

In this paper, we focus on characterizing  multiple phylogenetic trees that can be simultaneous displayed in binary tree-child networks. The computational complexity of determining the simultaneous display of multiple trees for tree-child networks is an open question \cite{report_2020}. Here, we present necessary conditions for phylogenetic trees to be simultaneously displayed in a tree-child network in terms of cherry count (Section~\ref{sec3}) and then prove that every pair of trees can always be  displayed in a tree-child network simultaneously (Section~\ref{sec4}). To the best of our knowledge, such mathematical quantification of simultaneous display of multiple trees in a tree-child network has yet to be rigorously investigated in the literature. Our results reveal new properties of tree-child network that may be useful for developing efficient algorithms for determine whether or not a set of trees can be simultaneously displayed in a tree-child network.  Although a set of phylogenetic trees may not be simultaneously displayed in any binary tree-child network, we also prove that any set of phylogenetic trees can be simultaneously displayed in some non-binary phylogenetic network in which a reticulate node may have indegree greater than two in Section~\ref{sec5}.

\begin{figure}[t]
\label{Fig1_examples}
\centering
\includegraphics[width=0.7\textwidth]{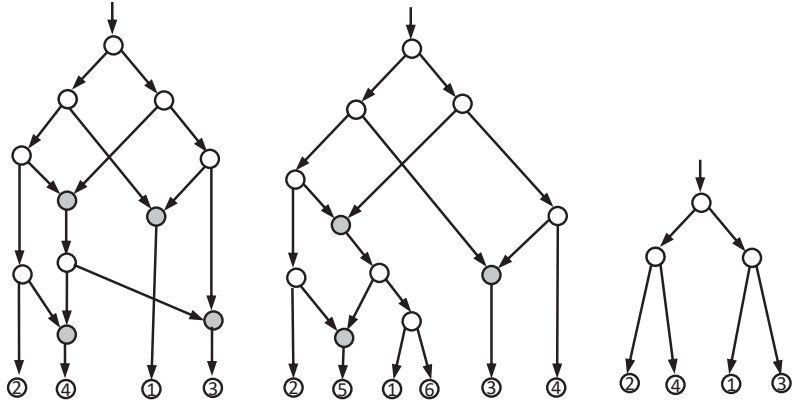}
\caption{An arbitrary phylogenetic network with four reticulate nodes (filled circles) on taxa 1, 2, 3, 4 (left), a tree-child network (middle) and a phylogenetic tree (right). }
\end{figure}

\section{Preliminaries}

\subsection{Phylogenetic networks}
A {\it phylogenetic network} on a set of taxa is a rooted acyclic digraph in which
there is a directed path from the root to every other node, the root is of outdegree two, each non-root node is of degree three or degree one, and degree-one nodes (i.e. leaves) represent  each a unique taxon.
In this paper, for convenience of study,  a branch is added right above its root, called the {\it root edge}.  A phylogenetic network on 
$\{1, 2, 3, 4\}$ is depicted in 
Figure~\ref{Fig1_examples} (left).

Let $\mathcal{X}$ be a set of taxa and $N$ be a phylogenetic network on  $\mathcal{X}$. A node  of $N$ is called a {\it reticulate node} if it is of indegree two and outdegree one.  A node is called a {\it tree node} if it is not a reticulate node.  Note that a tree node is the root, a leaf or a node of indegree one and outdegree two.

An edge of $N$ is called a {\it tree edge} if it enters a tree node.  It is called a {\it reticulate edge} if it enters a reticulate node.

A phylogenetic network is {\it tree-child} if every nonleaf node has a child that is a tree node.  The middle phylogenetic network in Figure~\ref{Fig1_examples} is tree-child. In contrast, the left network in Figure~\ref{Fig1_examples} is not tree-child, where both the parent $u$ of Leaf 4 and the parent $v$ of Leaf 3 are reticulate and the node right above $u$ has $u$ and $v$ as its child.     Clearly, tree-child networks have the following property. 

\begin{proposition}
\label{Prop1_TCproperty1}
 Let $N$ be a phylogenetic network. If $N$ is a tree-child, then for every non-leaf node $v$ of $N$, there is a directed path $P$ from $v$ to some leaf $\ell$ that consists of only tree edges.
\end{proposition}

Let $u$ and $v$ be two nodes of $N$. We say that $u$ is a {\it parent} of $v$ or $v$ is a {\it child}  of $u$ if $(u, v)$ is an edge of $N$. In this paper, we use $c(u)$ to denote the unique child of a reticulate node and $p(v)$ denote the unique parent of a tree node that is not a root.  More generally, we say that $u$ is an {\it ancestor} of $v$ or $v$ is a {\it descendant} of $u$ if there is a direct path from $u$ to $v$. 

A node is said to be a {\it common ancestor} of $u$ and $v$ if it is an ancestor of both $u$ and $v$. The {\it least common ancestor} of $u$ and $v$ is defined to be their common ancestor that is a descendant of any other common ancestors of $u$ and $v$, denoted by $\mbox{\rm lca}_N(u, v)$ if it exists. Note that the least common ancestor always exists for any pair of nodes in a phylogenetic tree. 

\subsection{Network decomposition}

Let $N$ be a phylogenetic network with $k$ reticulate nodes. 
Let the root of $N$ be $\rho$ and the $k$ reticulate nodes be 
$r_1, r_2, \cdots, r_k$. 
For each $x \in \{\rho, r_1, r_2, \cdots, r_k\}$, $x$ and its descendants that are connected to $x$ by a path consisting of only tree edges induces a subtree of $N$. Such $k+1$ subtrees are called the {\it tree components} of $N$
in \cite{Gunawan_16_IC}. It is easy to see that the tree components are disjoint and the node set of $N$ is the union of these tree components, as shown in Figure~\ref{Fig2_decomposition}.  This network decomposition is a powerful technique for studying the combinatorial properties and algorithms for tree-child networks \cite{Cardona_20_JCSS,fuchs_21_JEC} and other network classes \cite{Gambette_15} (see \cite{Zhang_18} for a survey).

We deduce the following known fact about tree-child networks from Proposition~\ref{Prop1_TCproperty1}. 

\begin{proposition}
\label{Prop2_retNum}
(\cite{Cardona_09b}) Let $N$ be a tree-child network on $n$ taxa. Each tree component of $N$  contains at least one leaf.  Therefore, $N$ contains $n-1$ reticulate nodes at most.
\end{proposition}

A tree component is said to be {\it complex} if it contains more than one leaf. 

\begin{proposition}
\label{Prop3_componentStruct}
Let  $N$ be a tree-child network on $n$ taxa and $C$ be a tree component of $N$. If $C$  contains $k$ leaves (of $N$), then $C$ contains $k-1$ nodes whose children are both in $C$. In particular, a non-complex tree component is a top-down path to a leaf from either the network root or a reticulate node. 
\end{proposition}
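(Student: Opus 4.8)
The plan is to analyze the structure of a tree component $C$ of a tree-child network $N$. A tree component is, by definition, a subtree induced by taking some node $x$ (the root $\rho$ or a reticulate node $r_i$) together with all descendants reachable from $x$ via paths of tree edges only. I would first clarify what the nodes of $C$ look like internally: since $C$ consists of tree edges, every internal node of $C$ has outdegree two in $N$, but some of its children may leave $C$ (i.e., be reticulate nodes, which are entered by reticulate edges and hence not part of $C$). So each node of $C$ has zero, one, or two children lying inside $C$.

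First I would set up the counting argument. Let me classify the nodes of $C$ by how many of their children lie in $C$. Call a node of $C$ a \emph{branching node} if both of its children are in $C$, a \emph{path node} if exactly one child is in $C$, and a \emph{terminal node} if neither child is in $C$. The terminal nodes of $C$ are precisely the leaves of the subtree $C$; note these are either leaves of $N$ (the $k$ taxa counted in the statement) or tree nodes of $N$ whose children are both reticulate — but by Proposition~\ref{Prop1_TCproperty1}, every nonleaf node has a tree-edge path to a leaf, so a nonleaf node of $C$ cannot have both children outside $C$. Hence the terminal nodes of $C$ are exactly the $k$ leaves of $N$ lying in $C$.

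Next I would apply a standard fact about full binary trees (trees in which every internal node has two children). The key observation is that $C$, after suppressing its path nodes (degree-one internal vertices, contracting them away), becomes a full binary tree whose leaves are the $k$ leaves of $N$ in $C$ and whose internal nodes are exactly the branching nodes. A full binary tree with $k$ leaves has exactly $k-1$ internal nodes; the branching nodes of $C$ are precisely these internal nodes, since suppression does not change which nodes have two children inside $C$. This gives the claimed count of $k-1$ nodes whose children are both in $C$. For the final sentence, if $C$ is non-complex then $k=1$, so $C$ has $k-1 = 0$ branching nodes, meaning every node of $C$ has at most one child in $C$; since $C$ is connected with a single root (either $\rho$ or some $r_i$) and a single leaf, it is a directed top-down path.

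The main obstacle, and the step deserving the most care, is justifying that the terminal nodes of $C$ are exactly the leaves of $N$ rather than including tree nodes both of whose children are reticulate. This is where Proposition~\ref{Prop1_TCproperty1} (equivalently the tree-child property) is essential: without it, a tree node could have two reticulate children and act as a terminal node of $C$ without being a leaf, breaking the bijection between terminal nodes and taxa. Once that correspondence is secured, the combinatorics is the routine full-binary-tree identity, so I would keep that part brief and spend the written argument emphasizing the role of the tree-child condition in the classification of terminal nodes.
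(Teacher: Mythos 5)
Your proof is correct. The paper states Proposition~\ref{Prop3_componentStruct} without giving any proof at all, so there is no argument to compare against; your write-up---using the tree-child property (Proposition~\ref{Prop1_TCproperty1}) to show that the terminal nodes of a tree component are exactly the leaves of $N$ it contains, then applying the standard count of internal nodes in a full binary tree---is precisely the routine justification the paper leaves implicit, and it correctly isolates the one step that needs care, namely ruling out nodes of $C$ whose children are both reticulate.
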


The network depicted in  Figure~\ref{Fig2_decomposition}
have two complex and two non-complex tree components.

\begin{figure}[t]
\centering
\includegraphics[width=0.3\textwidth]{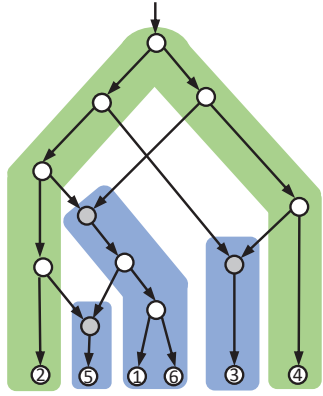}
\caption{Illustration of the decomposition of a phylogenetic network with $k$ reticulate nodes into $k+1$ disjoint tree components, where $k=3$.  The  tree component rooted at the network root is highlighted in green; other tree components each rooted at a recirculate node are in blue.
\label{Fig2_decomposition}}
\end{figure}

\subsection{Phylogenetic trees}

A phylogenetic tree is a phylogenetic network with no reticulate nodes,
as shown in Figure~\ref{Fig1_examples} (right). 
Let $T$ be a phylogenetic tree on ${\cal X}$ and $n=\vert {\cal X}\vert$. Then, $T$ has $n-1$ nonleaf nodes and $2n-1$ edges including the open root edge. 

Let $u$ be a node of $T$. We define the {subtree rooted at} $u$, denoted by $T_u$,  to be the tree with 
the node set $D(u)$ consisting of $u$ and all its descendants and the edge set consisting of the edges between the nodes of $D(u)$. 

A {\it cherry} of $T$ is a subtree rooted at a node whose children are both leaves. For example, the tree given in Figure~\ref{Fig1_examples} contains two cherries. One consists of 1 and 3, while another consists of 2 and 4.  It is not hard to see that every phylogenetic tree has at least one cherry and at most $n/2$ cherries. For any pair of taxa $a$ and $b$, we use $(a, b)$ to denote the cherry with leaves labeled with $a$ and $b$. Because there is no edge between any pair of two leaves, no confusion will arise with this cherry notation.

\subsection{Trees displayed in tree-child networks}

Consider a tree-child network $N$ on ${\cal X}$. There are two edges entering each reticulate nodes. If we remove only one entering edge for every reticulate node, the resulting subgraph $N'$ satisfies that  
each non-leaf node is of indegree one.
Note that $N'$ may
contain unlabeled leaves (i.e. nonleaf nodes of $N$ whose outgoing edges were all removed) if $N$ is not tree-child. Therefore,
we obtain a phylogenetic tree $T$ on $\calX$ from $N'$ if we further remove all the
nodes that are not the ancestors of any labeled leaves and then contract all the degree two
nodes.
The phylogenetic trees that can be obtained in this way are said to be a {\it displayed} tree of $N$. For instance, the tree in Figure~\ref{Fig1_examples} (right) is a tree displayed in the left network in the figure, as demonstrated in Figure~\ref{Fig3_treeDisplay}.

A set of multiple phylogenetic trees are said to be {\it co-displayed} in  a tree-child network if the network displays all the given trees simultaneously.

If $N$ contains $k$ reticulate nodes, there are $2^k$ possible ways for deleting only one entering edge for each reticulate node. As such, $N$ can display at most $2^k$ distinct phylogenetic trees. In particular, Proposition 
~\ref{Prop2_retNum} implies the following result.

\begin{proposition}
\label{propCount}
Let $N$ be a phylogenetic network on ${\cal X}$.  If $N$ is tree-child, it can display at most $2^{n-1}$ distinct phylogenetic trees, where $n=\vert{\cal X}\vert$.
\end{proposition}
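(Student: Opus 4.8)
The plan is to combine the counting observation recorded immediately before the statement with the bound on the number of reticulate nodes supplied by Proposition~\ref{Prop2_retNum}. First I would fix the number $k$ of reticulate nodes of $N$ and recall the mechanism that produces a displayed tree: for each reticulate node we delete exactly one of its two incoming edges, and then prune away the nodes that are no longer ancestors of a labeled leaf and suppress the resulting degree-two nodes. Since every reticulate node has indegree two, each one offers exactly two choices of which incoming edge to retain, and these choices are made independently across the reticulate nodes. Hence there are at most $2^{k}$ subgraphs $N'$ of the prescribed form, and therefore at most $2^{k}$ distinct displayed trees. The qualifier ``at most'' is genuine here, because distinct edge-retention patterns may collapse to the same tree once pruning and suppression are carried out.

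Next I would invoke the tree-child hypothesis through Proposition~\ref{Prop2_retNum}, which guarantees that every tree-child network on $n$ taxa satisfies $k \le n-1$. Substituting this into the count $2^{k}$ and using that the function $x \mapsto 2^{x}$ is increasing gives $2^{k} \le 2^{n-1}$, so $N$ displays at most $2^{n-1}$ distinct phylogenetic trees, as claimed.

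The only step carrying real mathematical content is the reticulate-node bound $k \le n-1$, and that has already been established in Proposition~\ref{Prop2_retNum}; the rest is the purely combinatorial remark that $k$ indegree-two reticulate nodes admit $2^{k}$ independent edge-retention patterns, followed by the trivial monotonicity of $2^{x}$. For this reason I do not anticipate any genuine obstacle: once both ingredients are in place the inequality follows immediately, and the single subtlety worth flagging is simply that the $2^{k}$ count is an upper bound rather than an exact count, which is exactly why the conclusion is phrased as ``at most.''
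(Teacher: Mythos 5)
Your proposal is correct and follows exactly the paper's own argument: the paper derives the bound from the observation that $k$ reticulate nodes give at most $2^{k}$ edge-deletion patterns, combined with the bound $k \le n-1$ from Proposition~\ref{Prop2_retNum}. Nothing is missing, and your remark that $2^{k}$ is only an upper bound (since different patterns may yield the same tree) is a correct and worthwhile clarification.
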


Throughout the paper, when we refer to trees, we mean phylogenetic trees. We also simply call phylogenetic tree-child networks tree-child networks. 
In the rest of the paper, we focus on the problem of determining whether a set of trees on ${\cal X}$ can be co-displayed in a tree-child network on ${\cal X}$ or not.

\begin{figure}[t]
\centering
\includegraphics[width=0.8\textwidth]{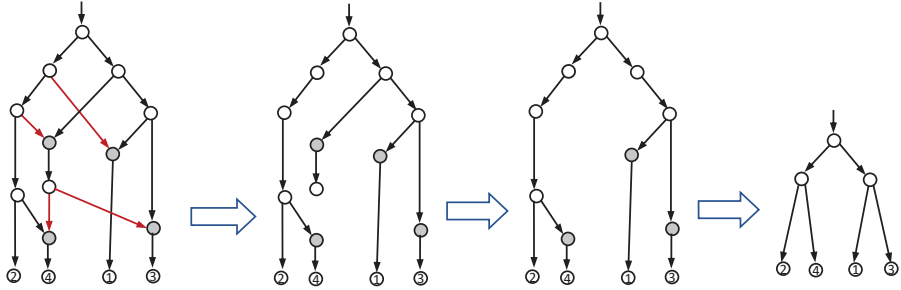}
\caption{Illustration of removing (red) entering reticulate edges (Step 1), deleting nodes such that there is no labeled leaf among its descendants (Stpe 2) and contracting degree-two nodes (Step 3).    
\label{Fig3_treeDisplay}}
\end{figure}

\section{Necessary conditions for co-display of multiple trees}
\label{sec3}

Let ${\cal X}$ be a set of $n$ taxa.
There are  
$a_n=1\times 3\times \cdots (2n-3)$ possible trees on ${\cal X}$. 
Note that $a_n> 2^{n-1}$ for $n\geq 4$. 
By Proposition \ref{propCount}, any more than $2^{n-1}$
distinct trees on $\calX$ cannot be co-displayed in any tree-child network on $\calX$ for $n\geq 4$. 
In practice, however, we  often have a small number of trees and would like to know whether  they can be co-displayed in a tree-child network or not.
In this section, we present necessary conditions for the co-display of multiple trees in tree-child networks in terms of the structural properties of trees.

\begin{figure}[bh!]
\centering
\includegraphics[scale=1.0]{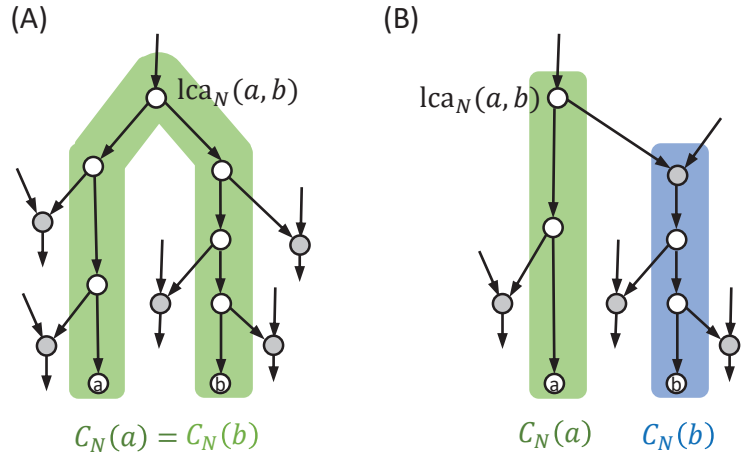}
\caption{Illustration of the display of a cherry in a tree-child network. (A) The two leaves of a displayed cherry are in the same component.  (B) The two leaves of a displayed cherry are in different tree components.  
\label{Fig4_cherry_display}}
\end{figure}

\subsection{Cherry-based necessary conditions for co-display of multiple trees}

Consider a set $\calT$ of $m$ trees on $\calX$. 
The number $c(\cal T)$ of distinct cherries that appear in at least one given tree is called the {\it cherry number} of $\cal T$.  For example, for $$\mathcal{T} = \{ (((1,2),(3,4)),5),  (((1,2),(3,5)),4) \},$$
$c(\calT)=3$ as only cherries $(1,2), (3,4), (3,5)$ appears in the trees of $\mathcal{T}$,
where the two trees are given in the Newick format, in which each pair of parentheses surround a node cluster of the represented tree.

\begin{theorem}
\label{theoremCherry}
Let $\calT$ be a set of trees on $n$ taxa and $c(T)$ be its cherry number.  If  $\mathcal{T}$ are co-displayed in a tree-child network on the same taxa, then 
$c(\calT)\le 2n-3$.
\end{theorem}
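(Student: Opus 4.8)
The plan is to construct an injection from the set of distinct cherries occurring in $\calT$ into the internal (outdegree-two) tree nodes of a co-displaying tree-child network $N$, and then to bound the number of such nodes. The guiding observation, which underlies everything, is that when a displayed tree is read off from $N$ \emph{only} reticulate edges are ever deleted. Hence every leaf reachable from a node $u$ along a path of tree edges survives below $u$ in \emph{every} displayed tree; call these the \emph{mandatory} leaves below $u$. By the definition of the tree-component decomposition, the mandatory leaves below $u$ are precisely the leaves of $N$ lying in the tree-component subtree rooted at $u$, and by Proposition~\ref{Prop1_TCproperty1} there is always at least one.

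Next I would fix a cherry $(a,b)$ displayed in $N$, say as a cherry of the displayed tree $T_i$, and let $v$ be the node of $N$ onto which the parent of this cherry is mapped. Since the parent of a cherry has two children in $T_i$, the node $v$ must be a tree node of outdegree two (a reticulate node has outdegree one and cannot branch) that is not contracted away. Writing $c_1,c_2$ for its children, in $T_i$ the subtree below each $c_j$ displays a single leaf. Using the mandatory-leaf observation I would argue that this pins down, for each side, a leaf determined by $N$ alone rather than by the chosen resolution: if $c_j$ is a tree node, then the tree-component subtree at $c_j$ must contain exactly one leaf of $N$ (the situation of Figure~\ref{Fig4_cherry_display}(A) when both children are tree nodes); if $c_j$ is a reticulate node, then the edge from $v$ into it was kept and its tree component must be non-complex, hence a path to a single leaf by Proposition~\ref{Prop3_componentStruct} (Figure~\ref{Fig4_cherry_display}(B)). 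In either case the leaf produced on side $j$ is uniquely determined by $v$, so the pair $\{a,b\}$ is determined by $v$. This determination step is the crux, and I expect the bookkeeping here---separating mandatory from optional (reticulation-reachable) leaves, and checking that the cherry condition forces exactly the mandatory leaf on each side---to be the main obstacle.

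The determination statement yields injectivity at once: assigning to each displayed cherry one realizing node $v$, two distinct cherries cannot be sent to the same $v$, since $v$ admits at most one cherry. It then remains to count the candidate targets. A routine degree count (equating total out-degree and total in-degree) shows that $N$ has exactly $n+k-1$ tree nodes of outdegree two including the root, where $k$ is the number of reticulate nodes, and $k\le n-1$ by Proposition~\ref{Prop2_retNum}. Finally, for $n\ge 3$ the root $\rho$ can never be a cherry parent---otherwise both of its displayed subtrees would be single leaves and the displayed tree would have only two leaves---so the injection lands in a set of size at most $(n+k-1)-1=n+k-2\le 2n-3$, giving the claimed bound; the cases $n\le 2$ are immediate.
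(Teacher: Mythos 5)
Your proof is correct, and it takes a genuinely different route from the paper's. The paper first establishes a structural lemma (Lemma~\ref{lemma1}) describing how a displayed cherry must sit relative to the tree-component decomposition, then classifies displayed cherries as type-1 (both leaves in one complex component) or type-2 (leaves in neighbouring components) and counts by leaves and reticulate edges: each leaf of a complex component lies in at most one type-1 cherry and each singleton component's leaf in at most two type-2 cherries, giving $t/2 + 2(n-t) \le 2n-3$ when $t\geq 2$, while the all-singleton case $t=0$ requires a separate acyclicity argument (some reticulate node has both parents in the root component, so one of the $2(n-1)$ reticulate edges is wasted), yielding $2(n-1)-1=2n-3$. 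You instead inject distinct displayed cherries into the non-root outdegree-two tree nodes of $N$; your ``determination'' step---the unique mandatory leaf below a tree-node child, or the unique leaf of the necessarily non-complex component below a reticulate child---is sound and plays the role of the paper's Lemma~\ref{lemma1}, after which the degree count $n+k-1$ combined with $k\le n-1$ (Proposition~\ref{Prop2_retNum}) finishes uniformly with no case split. What your approach buys: a cleaner, single-case argument and the sharper intermediate bound $n+k-2$ on the number of displayed cherries of a network with $k$ reticulations, of which the paper's $2n-3$ is the extreme case $k=n-1$. What the paper's approach buys: Lemma~\ref{lemma1} and the component-based bookkeeping are reused downstream (in Theorems~\ref{theorem2Cherry_2}, \ref{theoremGenCherry} and~\ref{theoremFourTaxa}), so its heavier machinery pays off beyond this one theorem. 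One small point worth making explicit in your write-up: the root-exclusion step relies on displayed trees carrying all $n$ taxa, which is exactly the paper's definition of display, so that step is legitimate.
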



Let $N$ be a tree-child network on a set $\calX$ of $n$ taxa. 
A cherry $(a, b)$ is said to be displayed in $N$ if it is a subtree of the network $N'$ obtained from $N$ through the removal of a set of reticulate edges that enter distinct reticulate nodes and contraction of degree-2 nodes.  Clearly, $c({\cal T})$ is less than or equal to the number of cherries that are displayed in $N$ if $N$ co-displays ${\cal T}$. We will prove the above theorem by proving that the number of cherries displayed in a tree-child network` on $\calX$  is at most $2n-3$. To this end, we first establish the following fact.

\begin{lemma}
\label{lemma1}
Let $N$ be a tree-child network on $\calX$ and $x, y\in \calX$ and let $C_N(z)$ denote the tree component that contains $z$ for any tree node $z$ of $N$. If the cherry $(x, y)$ is displayed in $N$, then one of the following statements holds.
\begin{itemize}
    \item  $C_N(x)=C_N(y)=C_N(\mbox{\rm lca}_N(x, y))$. In addition,  every middle node in the path from $\mbox{\rm lca}_N(x, y)$ to $x$ (resp. $y$) is of degree two in $C_N(x)$, as shown in Figure~\ref{Fig4_cherry_display}A.
    \item   $C_N(\mbox{\rm lca}_N(x, y))= C_N(x)\neq C_N(y)$.
    In addition, $C_N(y)$ is a path from its root $r_y$ to the leaf $y$ and $(\mbox{\rm lca}_N(x, y), r_y)$ is an entering edge of $r_y$,  as shown in Figure~\ref{Fig4_cherry_display}B. 
    \item   $C_N(\mbox{\rm lca}_N(x, y))= C_N(y)\neq C_N(x)$.
    In addition, $C_N(x)$ is a path from its root $r_x$ to the leaf $x$ and $(\mbox{\rm lca}_N(x, y), r_x)$ is an entering edge of $r_x$. 
\end{itemize}
\end{lemma}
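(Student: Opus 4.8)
The plan is to fix one witnessing display and then track a single branching node across the tree components of $N$. Since the cherry $(x,y)$ is displayed, I would choose a set of reticulate edges whose removal yields a subgraph $N'$ that, after deletion of the subtrees with no labelled leaf and contraction of degree-two nodes, gives a tree in which $x$ and $y$ are siblings. Let $v$ be the node of $N$ inducing the parent of that cherry, i.e.\ the unique node having two children in $N'$ from which the two degree-two paths to $x$ and to $y$ emanate. Because $v$ has out-degree two in $N'$ while every reticulate node has out-degree one, $v$ is a tree node, so $C_N(v)$ is well defined. The goal is to show that $v=\mbox{\rm lca}_N(x,y)$ and that its position relative to $C_N(x)$ and $C_N(y)$ forces one of the three listed configurations.

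The engine of the argument is that deleting reticulate edges never deletes a tree edge, and every edge internal to a tree component is a tree edge; hence all edges inside any tree component survive in $N'$. Consequently, as soon as a directed path of $N'$ reaches the root of a component $C$, every leaf of $C$ (there is at least one, by Proposition~\ref{Prop2_retNum}) becomes a descendant of that root in $N'$. Since the only leaves below $v$ in $N'$ are $x$ and $y$, I would argue that the path from $v$ to $x$ meets no component other than $C_N(v)$ and $C_N(x)$: an intermediate component would contribute a third leaf below $v$. Thus this path either stays inside $C_N(v)$ or crosses exactly one reticulate edge into $C_N(x)$, whose root is then reticulate; the same holds for the path to $y$.

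Tree-child then forces the trichotomy. As $v$ is a non-leaf node it has a tree-node child, and I would show that a tree-node child of $v$ toward $x$ prevents that path from ever leaving $C_N(v)$: the tail of a would-be reticulate edge on it would be a tree node whose tree-child–mandated tree sibling carries, through surviving tree edges, a second $N'$-reachable leaf below $v$, and checking that this leaf must coincide with $x$ or $y$ yields a contradiction (either with the disjointness of the two paths below $v$, or with $y\notin C_N(v)$). Hence at least one of $C_N(x),C_N(y)$ equals $C_N(v)$, which eliminates the configuration in which both differ from $C_N(v)$ and leaves exactly the three asserted cases. In the case $C_N(x)=C_N(v)$ the same local analysis shows each interior node of the $v$-to-$x$ path has its off-path child reticulate, hence has degree two in $C_N(x)$, matching Figure~\ref{Fig4_cherry_display}A; when the $y$-path instead leaves $C_N(v)$, the argument applied at the node just above its unique reticulate edge forces that edge to emanate from $v$ itself (otherwise that node would have two reticulate children, violating tree-child), giving the entering edge $(v,r_y)$, and since $C_N(y)$ then contains no leaf other than $y$, Proposition~\ref{Prop3_componentStruct} makes it a top-down path from $r_y$ to $y$, matching Figure~\ref{Fig4_cherry_display}B.

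The step I expect to be the main obstacle is the identification $v=\mbox{\rm lca}_N(x,y)$, because in a network the least common ancestor need not exist a priori. I would handle it by ordering the tree components by reachability (a reticulate edge strictly descends in this order, and acyclicity forbids ever returning to an ancestor component) and showing that every common ancestor $c$ of $x$ and $y$ is an ancestor of $v$: since $x$ lies in $C_N(v)$ or in the path-component immediately below it, the component of $c$ must sit at or above $C_N(v)$, and in either case $c$ reaches $v$ through the root of $C_N(v)$, while no common ancestor can lie strictly below $v$. This shows $\mbox{\rm lca}_N(x,y)$ exists and equals $v$, so $C_N(\mbox{\rm lca}_N(x,y))=C_N(v)$ and the three structural cases obtained above are precisely those asserted in the statement.
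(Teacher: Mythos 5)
Your first three paragraphs are essentially sound and run on the same engine as the paper's proof: tree components survive intact in any display $N'$, every component contains a labelled leaf (Proposition~\ref{Prop2_retNum}), and the tree-child property supplies tree-edge paths from any node to a leaf of its own component. From these facts you correctly get that each of the two paths out of the branching node $v$ crosses at most one reticulate edge, that both paths cannot leave $C_N(v)$, and hence the trichotomy --- stated for $v$. The paper organizes this instead as a case analysis on whether $x$ and $y$ share a component, but the content is the same. (One small wording issue: an intermediate component on the $x$-path need not contribute a ``third'' leaf if that component is $C_N(y)$; that subcase is excluded by the disjointness at $v$ of the two paths, which your definition of $v$ provides, so it is a matter of phrasing, not substance.)

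The genuine gap is the last step, the identification $v=\mbox{\rm lca}_N(x,y)$, and it cannot be repaired, because your claim that no common ancestor of $x$ and $y$ lies strictly below $v$ is false. An interior node of the $v$-to-$x$ path may have an off-path \emph{reticulate} child --- exactly what your own degree-two analysis permits --- whose edge was removed in the chosen display but still exists in $N$, and $y$ may be reachable from it in $N$. Concretely, take taxa $\{x,y,u,w\}$ and the binary network with root $\rho$ and edges $(\rho,v)$, $(\rho,z)$, $(v,c)$, $(v,r_y)$, $(c,x)$, $(c,r')$, $(z,u)$, $(z,r')$, $(r',p')$, $(p',w)$, $(p',r_y)$, $(r_y,y)$, where $r'$ and $r_y$ are the reticulate nodes. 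This network is acyclic and tree-child, and removing the reticulate edges $(c,r')$ and $(p',r_y)$ displays the tree $((x,y),(u,w))$, so the cherry $(x,y)$ is displayed with branching node $v$. But $c$ is also a common ancestor of $x$ and $y$ in $N$, via the directed path $c\to r'\to p'\to r_y\to y$, and $c$ is a descendant of the only other common ancestors, $v$ and $\rho$; hence $\mbox{\rm lca}_N(x,y)=c\neq v$. In particular your assertion that every common ancestor reaches $v$ through the root of $C_N(v)$ fails: a common ancestor need not be above $v$ at all.

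You should also know that this is not merely a defect of your write-up. With the paper's definition of $\mbox{\rm lca}_N$, the example above violates the lemma as literally stated: none of the three bullets holds, since the entering edge of $r_y$ used by the display is $(v,r_y)$, while $(\mbox{\rm lca}_N(x,y),r_y)=(c,r_y)$ is not an edge of $N$. The paper's own proof makes the same unjustified leap when, in its Case 2, it asserts that ``the start node of $e$ is $\mbox{\rm lca}_{xy}$.'' (In the same-component case the identification is valid, because a path leaving a component can never re-enter it, so there $\mbox{\rm lca}_N$ coincides with the component-wise lca and with $v$.) The productive fix is therefore not to prove $v=\mbox{\rm lca}_N(x,y)$, which is false, but to restate the second and third bullets in terms of your node $v$ --- the common ancestor that the display turns into the parent of the cherry. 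That weaker statement is what the counting argument in Theorem~\ref{theoremCherry} actually uses, and it is exactly what your first three paragraphs establish.
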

\begin{proof} Let $(x, y)$ be a displayed cherry of $N$.
 We consider two possible cases.
 
 {\bf Case 1}. The leaves $x$ and $y$ are the two leaves in a complex tree component $C$. 
 
 Let the root of $C$ is denoted by $\rho_C$.  Since $C$ is a rooted tree, $\mbox{lca}_N(x, y)=\mbox{lca}_C(x, y)$ exists in $C$, denoted by $\mbox{\rm lca}_{xy}$.  Let $P$ be the path from $\mbox{\rm lca}_{xy}$ to $x$.
 If $P$ contains a node $a$ that has two children $b$ and $c$ in $C$ such that $b$ is not in $P$, by Proposition~\ref{Prop1_TCproperty1}, $b$ is connected to a leaf $z\neq x$ via a directed path in $C$. The removal of any set of reticulate edges will never remove the nodes in $C$ and thus it is impossible to obtain a cherry $(x, y)$ in the resulting network.
 Therefore, each middle node of the path $P$ is of degree-two in $C$. By the same argument, each middle node of the path from  $\mbox{\rm lca}_{xy}$ to $y$ is of degree-two in $C$. This case is shown in Figure~\ref{Fig4_cherry_display}A.
 
  {\bf Case 2}. The leaves $x$ and $y$ are in different complex tree components,
  i.e. $C_N(x)\neq C_N(y)$. 
  
  We use $\rho_a$ and $\rho_b$ to denote the root of $C_N(a)$ and $C_N(b)$ respectively. Since $N$ is acyclic, there are three possibilities: 
  (i) only $\rho_y$ is a descendent of $\rho_x$,  (ii) only $\rho_x$ is a descendent of $\rho_y$ and (iii) 
  $\rho_x$ and $\rho_y$ has no  ancestor-descendant relation.
  
  If Case (i) holds, then there is a path $P$ from $r_x$ to $r_y$. 
  If $P$ passes through other tree-components, it is impossible for $x$ and $y$ to form a cherry that is displayed in $N$, as every tree component contains at least one leaf.  The path $P$ leads to $r_y$ through an entering edge $e$ of $r_y$, as the parent of every node other than $r_y$ of $C_N(y)$ is in $C_N(y)$ (Figure~\ref{Fig4_cherry_display}B). This also implies that $y$ must be the unique leaf of $C_N(y)$ and that the start node of $e$ is $\mbox{\rm lca}_{xy}$. We have proved that the point two in the lemma holds.
  
  Case (ii) is symmetric to Case (i). By the same reasoning, the point three in the lemma holds.
  
  Case (iii) is impossible. The reason is that under such a case, any undirected path between $x$ and $y$ must go through other tree components and thus $x$ and $y$ cannot form a cherry that is displayed in $N$.  This concludes the proof.
\end{proof}


Consider a displayed cherry $(a, b)$ of $N$.  It is said to be {\it type-1} if both $a$ and $b$ belong to the same complex tree component. It is {\it type-2} if $a$ and $b$ belong to distinct neighboring tree components, respectively.
\\

{\it Proof of Theorem~\ref{theoremCherry}.}
Let $t$ be the number of the leaves of $N$ that appear in all complex tree components. Then, there are $n-t$ tree components that each  contains exactly one leaf, which are simply a path from their root to their leaf. 

For a complex tree component $C$ with $k$ leaves. If $a$ and $b$ are two leaves of $C$ that form a type-1 displayed cherry, by Lemma~\ref{lemma1}, there is  no degree-three nodes in the paths from $\mbox{lca}_C(a,b)$ to $a$ and $b$. This implies that each leaf of  $C$ can appear in at most one type-1 displayed cherry and thus all the $k$ leaves of $C$ can give  at most $k/2$ type-1 displayed cherries. Overall, $N$ can display $t/2$ type-1 cherries at most.

Each type-2 displayed cherry $(a, b)$ contains  at least one leaf that appears in a non-complex tree component. By Lemma~\ref{lemma1}, if $\mbox{lca}_N(a, b)$ appears in the same tree component as $a$, which is denoted by $C_N(a)$, then $b$ must be the unique leaf of the non-complex tree component $C_N(b)$ and the reticulate edge  connects $\mbox{lca}_N(a, b)$  (in $C_N(a)$) to the root of $C_N(b)$. This implies that for each non-complex tree component $S$, its unique leaf $\ell_S$ can appear in at most two type-2 displayed cherries $(x, \ell_S)$ such that $\mbox{lca}_{N}(x, \ell_S)$ and $x$ are both in $C_N(x)$. Therefore, there are at most 
$2(n-t)$ displayed cherries that contains the unique leaf of some non-complex tree component. 

If $t\neq 0$, then
$t\geq 2$ and thus
the number of cherries that are displayed in $N$ is at most
$t/2+2(n-t)=2n-3t/2 \leq 2n-3$.

If $t=0$, there are $n$ tree components each containing a unique leaf and all displayed cherries are of type-2.  Let the $n$ tree components of $N$ be 
$C_0, C_1, \cdots, C_{n-1}$, where $C_0$ is the one that contains the network root and the rest are rooted at a reticulate node and contain a unique leaf (Figure~\ref{fig:ExNet1}).
We further use  $\ell_i$ and $\rho_i$ to denote the unique leaf and the root of $C_i$, respectively, for each $i$.

For an $i>1$,  $\rho_i$ is a reticulate node and has two parents, say $p'_i$ and $p''_i$. If the two parents are in distinct tree-components $C_j$ and $C_k$, $j\neq k$, then
$(\ell_j, \ell_i)$ and
$(\ell_k, \ell_i)$ are the only two possible cherries which can be displayed and whose display involve a reticulate edge entering $\rho_i$. If $p'_i$ and $p''_i$ are found in a common tree component $C_j$, then,
only $(\ell_j, \ell_i)$ can be a displayed cherry whose display involves a reticulate edge entering $\rho_i$.
Since $N$ is acyclic, there are at least one reticulate node whose parents are both in $C_0$, as shown in Figure~\ref{Fig2_decomposition}.  Counting the displayed type-2 cherries in terms of the used reticulate edges, we have that the number of the cherries displayed in $N$ is at most
$2(n-1)-1=2n-3$.\\
\hspace*{\fill}$\Box$

\begin{figure}[t]
\centering
\includegraphics[scale=0.8]{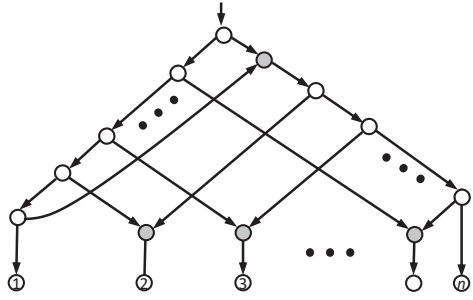}
\caption{A tree-child network with $n$ taxa that displays $2n-3$ cherries $(1, i), (1, n), (i,n)$,  $i=2, ..., n-1$. \label{fig:ExNet1}}
\end{figure}

\noindent {\bf Remark.} 
The bound of $2n-3$ is tight, as  the tree-child network on $n$ shown in Figure~\ref{fig:ExNet1} displays exactly $2n-3$ cherries: $(1,i),  (1,n), (i,n)$ for each $i=2, \cdots, n-1$.

\begin{corollary}
\label{corollary34}
For odd $n\geq 5$, there exist four trees on $n$ taxa that cannot be co-displayed in a tree-child network.
\end{corollary}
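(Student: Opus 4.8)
The plan is to apply Theorem~\ref{theoremCherry} in the contrapositive. That theorem says any family co-displayed in a tree-child network on $n$ taxa has cherry number at most $2n-3$, so it suffices to exhibit four trees $T_1,T_2,T_3,T_4$ on $\calX$ (with $|\calX|=n$) whose cherry number satisfies $c(\calT)\ge 2n-2$. The whole problem then becomes a construction.

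First I would record the single-tree bound. Since each leaf is the child of exactly one node, a leaf lies in at most one cherry, so the set of cherries of a tree is a \emph{matching} on the leaf set regarded as the vertex set of the complete graph $K_n$. For odd $n$ a matching has at most $(n-1)/2$ edges, attained precisely by a near-perfect matching (one saturating all but one leaf). Because $c(\calT)$ counts distinct cherries, $c(\calT)\le\sum_{i=1}^{4}|\mathrm{cherries}(T_i)|\le 4\cdot\frac{n-1}{2}=2n-2$, with equality exactly when each $T_i$ realizes a near-perfect matching and the four matchings are pairwise edge-disjoint, so that no cherry is double-counted. Hence the goal reduces to hitting this upper bound.

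The construction then splits into two independent tasks. The combinatorial task is to produce four pairwise edge-disjoint near-perfect matchings of $K_n$ for odd $n\ge 5$; here I would invoke the classical fact that for odd $n$ the edges of $K_n$ admit a proper $n$-edge-colouring whose colour classes are near-perfect matchings of size $(n-1)/2$ (equivalently, the round-robin schedule on $n$ players). Since $n\ge5>4$, at least four colour classes exist, and distinct colour classes are automatically edge-disjoint; call them $M_1,\ldots,M_4$. The realizability task is to turn each $M_i$ into a tree $T_i$ whose cherry set is \emph{exactly} $M_i$. For this I would take the $(n-1)/2$ pairs of $M_i$ as cherries and hang them, together with the single unmatched leaf $c$, as the pendant pieces of a caterpillar, e.g.\ $T_i=((\ldots((a_1,b_1),(a_2,b_2)),\ldots),c)$ where $(a_j,b_j)$ ranges over the pairs of $M_i$. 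In such a caterpillar every spine node has at least one internal child, so the only pairs of sibling leaves are the prescribed pairs, and the cherries of $T_i$ are precisely $M_i$.

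Combining the two tasks yields four trees with pairwise disjoint cherry sets, each of size $(n-1)/2$, so $c(\calT)=2n-2>2n-3$, and Theorem~\ref{theoremCherry} forbids their co-display. The step needing the most care is realizability: one must verify that assembling the matched pairs creates no unintended cherry, which is exactly why a caterpillar shape (rather than an arbitrary tree over the pairs) is used, since it forces two leaves to be siblings only inside an original pair. The only other thing to check is the threshold: for odd $n=3$ the graph $K_3$ has just three edges, too few to furnish four disjoint matchings, so the argument genuinely requires $n\ge5$. \hspace*{\fill}$\Box$
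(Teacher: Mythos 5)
Your proof is correct and follows essentially the same route as the paper: the paper also applies Theorem~\ref{theoremCherry} in the contrapositive and constructs four trees whose cherry sets are four pairwise disjoint near-perfect matchings of size $(n-1)/2$, obtained as the parallel side/diagonal classes of a regular $n$-gon --- which is precisely the geometric form of the round-robin edge-colouring you invoke. The only difference is expository: you make explicit two steps the paper leaves implicit, namely the per-tree matching bound on cherries and the caterpillar construction certifying that each matching is realizable as the exact cherry set of some tree.
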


\begin{proof}
%
For odd $n$, we arrange 
the taxa 1 to $n$ on the vertices of the regular $n$-sided polygon $G$ clockwise (left panel, Figure~\ref{Fig6_4trs}). For each side $(i, i+1)$ or $(n, 1)$ of $G$, there are $\frac{n-1}{2}-1$ diagonals parallel to it. 
We can use such  $\frac{n-1}{2}$ parallel side/diagonals to define a tree with $\frac{n-1}{2}$ cherries. More precisely, we can construct  four trees $T_1$-$T_4$ such that:
\begin{itemize}
    \item $T_1$ contains the $(n-1)/2$ cherries 
    $(1, 2), (n, 3), (n-1, 4), ..., \left(\frac{n+5}{2}, \frac{n+1}{2}\right)$.
    \item $T_2$ contains the $(n-1)/2$ cherries 
    $\left(\frac{n-1}{2}-j, \frac{n+1}{2}+j\right)$, $j=0, 1, \cdots, \frac{n-3}{2}$.
    \item $T_3$ contains the $(n-1)/2$ cherries 
    $\left(\frac{n+1}{2}-j, \frac{n+3}{2}+j\right)$, $j=0, 1, \cdots, \frac{n-3}{2}$.
    \item $T_4$ contains the $(n-1)/2$ cherries 
    $(n, 1), (n-1, 2), ..., \left(\frac{n+3}{2}, \frac{n-1}{2}\right)$.
\end{itemize}
In total, the four trees $T_1$-$T_4$ contain  $2n-2$ distinct cherries, which is larger than $2n-3$. By Theorem~\ref{theoremCherry}, 
the trees cannot be co-displayed in any tree-child network.
\end{proof}

\begin{theorem}
\label{theorem2Cherry_2}
Suppose a set of trees $\mathcal{T}$ on an $n$-taxa set $\mathcal{X}$ are co-displayed in a tree-child network.  
Then,  there exists a taxon $x$ such that at most two distinct cherries consisting of $x$ and another taxon appear in the given trees.
More generally, for any integer $k$ $(1 \le k \le n-1)$, the number of taxa $x$ that appears in $k$ or less distinct cherries in the trees is at least $k-1$.
\end{theorem}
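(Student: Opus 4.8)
The plan is to prove a cleaner statement about the whole cherry graph of $N$ and then read off both assertions. Let $N$ be a tree-child network co-displaying $\calT$, and let $H$ be the graph on $\calX$ whose edges are exactly the cherries displayed in $N$; write $d(x)$ for the degree of a taxon $x$ in $H$. Every cherry occurring in some tree of $\calT$ is displayed in $N$, so the number of cherries through $x$ in $\calT$ is at most $d(x)$, and it suffices to prove the conclusions for $H$. Moreover the general statement is equivalent to the single inequality $d_{(j)}\le j+1$ for all $j$, where $d_{(1)}\le\cdots\le d_{(n)}$ is the sorted degree sequence (take $j=k-1$), and the first assertion is the case $j=1$. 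So my goal reduces to producing an ordering $x_1,\dots,x_n$ of $\calX$ with $d(x_i)\le i+1$ for every $i$: once we have it, $x_1,\dots,x_i$ are $i$ taxa of degree at most $i+1$, which forces $d_{(i)}\le i+1$.

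To control $d(x)$ I would classify the cherries incident to $x$ using Lemma~\ref{lemma1}. A displayed cherry $(x,y)$ is either type-1 (both leaves in the complex component $C_N(x)$) or type-2 (one leaf is the unique leaf of a non-complex component, joined to the $\mathrm{lca}$ in the other leaf's component by a reticulate edge). As in the proof of Theorem~\ref{theoremCherry}, $x$ lies in at most one type-1 cherry (clean paths force this), and if $x$ is the pendant leaf of a type-2 cherry then, since the reticulation at the root of $C_N(x)$ has only two incoming edges, $x$ is such a pendant leaf in at most two cherries. The crucial observation is that these two roles are mutually exclusive: a type-1 cherry needs $C_N(x)$ complex, whereas being a pendant leaf needs $C_N(x)$ non-complex. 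Thus, apart from the type-2 cherries in which $x$ is the \emph{non}-pendant (lca-side) leaf, $x$ carries at most one cherry if $C_N(x)$ is complex and at most two if $C_N(x)$ is non-complex.

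It remains to bound the number of type-2 cherries $(x,b)$ in which $x$ is the lca-side leaf. By Lemma~\ref{lemma1} each such $b$ is the unique leaf of a non-complex component reached from $C_N(x)$ by a reticulate edge. Form the component DAG, putting an arc $C_N(u)\to C_r$ whenever $N$ has a reticulate edge $(u,r)$ into the root $r$ of a component $C_r$; acyclicity of $N$ makes this a directed acyclic graph, and each lca-side partner $b$ of $x$ then lies in a strict descendant of $C_N(x)$. I would now take $x_1,\dots,x_n$ to list the components in a reverse topological order, so that for every arc $A\to B$ the head $B$ precedes the tail $A$ (descendants before ancestors), breaking ties arbitrarily inside each component. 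If $x_i$ lies in component $D$, then every lca-side partner of $x_i$ lies in a component appearing strictly earlier, hence is one of $x_1,\dots,x_{i-1}$, so $x_i$ has at most $i-1$ such partners. Combining with the previous paragraph, $d(x_i)\le (i-1)+1=i$ when $C_N(x_i)$ is complex and $d(x_i)\le (i-1)+2=i+1$ when it is non-complex, giving $d(x_i)\le i+1$ in both cases, as required.

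The main obstacle will be the bookkeeping behind the two structural bounds of the middle paragraph: verifying from Lemma~\ref{lemma1} that a pendant leaf occurs in at most two type-2 cherries (tracing each to a distinct incoming reticulate edge of its component root) and that lca-side partners always sit in descendant components, together with the easy but essential remark that the type-1 and pendant roles cannot coexist at one taxon. It is precisely this exclusivity that supplies the constant $+1$ rather than $+2$, and makes the bound $d_{(1)}\le 2$ tight already for the first assertion, as witnessed by Figure~\ref{fig:ExNet1}. Everything else is the reverse topological ordering, whose correctness rests only on acyclicity of $N$.
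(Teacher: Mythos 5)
Your proposal is correct and is essentially the paper's own argument: the paper likewise classifies displayed cherries via Lemma~\ref{lemma1} (at most one type-1 cherry per leaf, at most two pendant type-2 cherries through the two entering edges of the component root, and lca-side partners confined to components lying later in a topological order of the tree components), and it obtains the general bound by examining the last $k-1$ components in that order, which is exactly your reverse topological ordering. Your packaging of this count as the uniform degree bound $d(x_i)\le i+1$ is a cleaner and somewhat more careful write-up of the same approach (making explicit, e.g., the mutual exclusivity of the type-1 and pendant roles), but not a different one.
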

\begin{proof} 
Since $N$ is acyclic, there must be a tree component $C$ that does not contain the parents of any reticulate node. If $C$ contains more than two leaves, there is at least one cherry.
Let $(x, y)$ be a cherry appearing in 
$C$. The cherry $(x, y)$ is then the unique cherry containing $x$ in $N$. 

If $C$ contains exactly two nodes: its root $r_C$ and its unique leaf $\ell_C$. For example, the network in Figure~\ref{fig:ExNet1} contains $n-2$ such two-node tree components in the middle of the bottom. It is easy to see that $\ell_C$ can only form a displayed cherry with the leaf of
a tree component that contains one of two parents of $r_C$. Since there are at most two
such tree components, the statement is tree for $\ell_C$. 

More generally,  we can arrange  the  components of $N$ in the topological order (such that each reticulate edge starts from a component and enters a component on the right) and number them from $1$ to $c$ ($c\leq n-1$)  starting from the left.
We consider the last $k-1$ tree components. For a leaf $\ell$ in a non-complex component, it can form a cherry with at most two leaves in the tree components on the left and at most $k-2$ leaves on the right. 
For a leaf $\ell$ in a complex component, it can form at most $k-2$  cherries with 
a leaf in the component on the right and at most one cherry  with a leaf within the component. 
\end{proof}

\begin{corollary}
\label{corollary3Trees}
For even $n\geq 4$, there exist three trees on $n$ taxa that cannot be co-displayed in a tree-child network.
\end{corollary}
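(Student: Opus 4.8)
The plan is to apply the contrapositive of Theorem~\ref{theorem2Cherry_2}. That theorem tells us that whenever a collection of trees is co-displayed in a tree-child network, some taxon lies in at most two distinct cherries across the whole collection. Hence, to produce three trees that cannot be co-displayed, it suffices to construct three trees on the (even) $n$ taxa in which \emph{every} taxon occurs in at least three distinct cherries. A counting heuristic shows what such trees must look like: a tree on $n$ leaves has at most $n/2$ cherries, so three trees contain at most $3n/2$ cherries counted with multiplicity, whereas forcing each of the $n$ taxa into three distinct cherries requires at least $3n/2$ distinct cherry-pairs. Both bounds must therefore be tight, which means each tree must have its cherries form a \emph{perfect matching} of the taxa, and the three matchings must be pairwise edge-disjoint so that the three cherries at each taxon are genuinely distinct.

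First I would reduce the statement to a purely graph-theoretic task: produce three pairwise edge-disjoint perfect matchings $M_1,M_2,M_3$ on the vertex set $\{1,\dots,n\}$. Identifying the taxa with the vertices of the $n$-cycle $1,2,\dots,n,1$, I take the two perfect matchings of the cycle, $M_1=\{(2k-1,2k):1\le k\le n/2\}$ and $M_2=\{(2k,2k+1):1\le k\le n/2-1\}\cup\{(n,1)\}$, together with the ``diameter'' matching $M_3=\{(i,\,i+n/2):1\le i\le n/2\}$ joining antipodal vertices. Since $n$ is even with $0<n/2<n$, the map $i\mapsto i+n/2 \pmod n$ is a fixed-point-free involution, so $M_3$ is a perfect matching; and each diameter edge joins vertices at cyclic distance $n/2>1$, hence cannot coincide with any distance-one edge of the cycle $M_1\cup M_2$. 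One checks directly that $M_1$ and $M_2$ are themselves edge-disjoint, so $M_1,M_2,M_3$ are pairwise edge-disjoint. For $n=4$ this recovers the three perfect matchings of $K_4$.

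Next I would realize each matching as the cherry set of a tree. Given $M_i$, I form the $n/2$ cherries prescribed by its pairs and attach their roots as the leaves of an arbitrary rooted binary tree on $n/2$ leaves, obtaining a tree $T_i$ on $\mathcal{X}$. Each cherry root is an internal node with two leaf children, so no two cherry roots can themselves form a cherry; consequently the cherries of $T_i$ are exactly the pairs of $M_i$. Because $M_1,M_2,M_3$ are pairwise edge-disjoint and each covers every vertex exactly once, each taxon appears in precisely three distinct cherries across $T_1,T_2,T_3$. By the contrapositive of Theorem~\ref{theorem2Cherry_2}, the trees $T_1,T_2,T_3$ cannot be co-displayed in any tree-child network.

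The main obstacle is the edge-disjointness-plus-perfectness verification, not the tree construction: everything hinges on simultaneously guaranteeing that the three matchings are perfect \emph{and} pairwise edge-disjoint for every even $n\ge 4$, since that is exactly what pins each taxon to three distinct cherries. It is worth emphasizing that Theorem~\ref{theoremCherry} alone does not yield this corollary, because the total cherry number $3n/2$ exceeds the bound $2n-3$ only for $n=4$; it is the per-taxon strength of Theorem~\ref{theorem2Cherry_2} that makes the argument succeed uniformly for all even $n\ge 4$.
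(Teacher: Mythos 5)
Your proposal is correct and rests on exactly the same pillar as the paper's proof: both apply the contrapositive of Theorem~\ref{theorem2Cherry_2} by exhibiting three trees whose cherry sets are three pairwise edge-disjoint perfect matchings of the taxa, so that every taxon lies in exactly three distinct cherries. The only real difference is how the matchings are built. The paper treats $n=4$ separately (the three quartet trees, i.e.\ the three perfect matchings of $K_4$) and for even $n\ge 6$ uses a geometric construction: taxa $1,\dots,n-1$ on a regular $(n-1)$-gon with taxon $n$ at the center, each matching being a parallel class of sides/diagonals plus the perpendicular spoke to the center. You instead take the two perfect matchings of the $n$-cycle together with the antipodal matching $\{(i,i+n/2)\}$, which is arguably cleaner: it handles all even $n\ge 4$ uniformly (recovering the $K_4$ case for $n=4$), and edge-disjointness is immediate from parity and cyclic distance rather than from properties of parallel classes. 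You also make explicit two points the paper leaves implicit: the realization step (hanging the $n/2$ cherries off an arbitrary rooted binary tree, and checking that no spurious cherries arise), and the observation that Theorem~\ref{theoremCherry} alone cannot yield the corollary since $3n/2 \le 2n-3$ for $n\ge 6$, so the per-taxon bound of Theorem~\ref{theorem2Cherry_2} is genuinely needed. Both of these additions strengthen the write-up without changing the underlying argument.
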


\begin{proof}
For $n=4$, we consider the following three trees: $$T_1=((1,2),(3,4)),  T_2=((1,3),(2,4)), T_3=((1,4),(2,3))$$ on taxa $\{1, 2, 3, 4\}$. 
Since every taxa appears in three distinct cherries in the given trees, by Theorem~\ref{theorem2Cherry_2}, we conclude that there is no tree-child network on the taxa that can display the three trees simultaneously.

For even $n\geq 6$, we arrange $1$ to $n-1$ to the vertices of 
the regular $(n-1)$-sided polygon as described above and
place the vertex $n$ in the center of the polygon (right panel, Figure~\ref{Fig6_4trs}).
For each side $(i, i+1)$ ($1\leq i\leq n-2$) or $(n-1, 1)$, there are $(n/2-2)$ diagonals parallel  and one diagonal (incident to the vertex $n$) perpendicular to it.  We can use this fact to define
three trees each containing a disjoint set of $n/2$ cherries. Since each taxa appears in three distinct cherries in the constructed trees,  by Theorem~\ref{theorem2Cherry_2}, the three trees cannot be co-displayed in any tree-child network. 
\end{proof}

\begin{figure}[b!]
\centering
\includegraphics[scale=0.6]{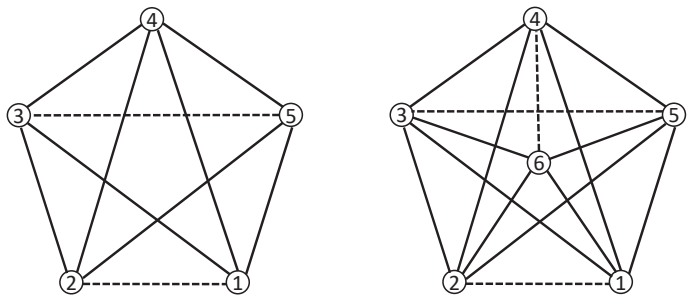}
\caption{For odd $n$,  in 
the regular $n$-sided polygon, there are  $(n-3)/2$ diagonals parallel to every side.  The sets of $(n-1)/2$ parallel sides/diagonals are used to define trees with $(n-1)/2$ cherries in the proof of Corollary~\ref{corollary34}. For even $n$, the sets of $(n-2)/2$ parallel side/diagonals in the the regular $(n-1)$-sided polygon plus a diagonal incident to $n$ is used to define trees with $n/2$ cherries in the proof of Corollary~\ref{corollary34}. Here $n=5$ and the dashed sides/diagonals are used to define a tree. \label{Fig6_4trs}}
\end{figure}

\subsection{Generalizations}

\subsubsection{Subset of taxa}

We now present a stronger  necessary condition for the co-display of multiple trees on  $\calX$ by considering the cherries consisting of taxa drawn from the \emph{subsets} of  $\mathcal{X}$. 
We need the following lemma for establishing a stronger necessary condition.

\begin{lemma}
\label{lemmaSingleton}
For any tree-child network $N$ on an $n$-taxa set $\calX$, we can \emph{add} new reticulate edges to $N$ 
such that $N$ remains a tree-child network after insertion and decomposition of this new network leads to all singleton components. That is, each tree-child network is displayed in at least one tree-child network with $n-1$ reticulations on $X$.
\end{lemma}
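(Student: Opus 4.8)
The plan is to repeatedly convert complex tree components into singleton (non-complex) components by inserting new reticulate edges, while preserving the tree-child property at every step. Recall from Proposition~\ref{Prop3_componentStruct} that a complex tree component $C$ with $k$ leaves contains $k-1$ internal nodes whose children are both in $C$; equivalently, $C$ is a small binary tree hanging from the root of $N$ or from a reticulate node. The target configuration (as in Figure~\ref{fig:ExNet1}) is one in which every tree component is a bare root-to-leaf path, i.e.\ there are exactly $n-1$ reticulate nodes, which by Proposition~\ref{Prop2_retNum} is the maximum possible. So the lemma asks us to \emph{saturate} an arbitrary tree-child network up to a maximal one that still displays it.

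First I would make precise the local surgery that peels one leaf off a complex component. Let $C$ be a complex component and let $(a,b)$ be a cherry inside $C$, with common parent $w$ (which exists since $C$ is a binary tree with at least two leaves). I detach $b$ from $w$ by subdividing some existing tree edge elsewhere to create a new tail node $s$, subdividing the edge $(w,b)$ to create a node that becomes reticulate, and adding the reticulate edge $(s,\cdot)$ so that $b$ now sits in its own root-to-leaf component whose root is a new reticulate node with parents $w$ and $s$. Concretely, $b$'s component becomes the singleton path described in Lemma~\ref{lemma1}(B). The key points to verify are: (i) the insertion keeps the digraph acyclic — this is where the placement of the second parent $s$ matters, and I would choose $s$ on a tree edge that is not a descendant of $b$ (e.g.\ on the root edge or on an already-processed path), so no directed cycle can close; and (ii) the tree-child property survives — the node $w$ loses the tree child $b$ but retains its other child $a$ (or the subtree containing $a$) as a tree child, and the newly created reticulate node $b$'s-root has $b$ as a tree child, so Proposition~\ref{Prop1_TCproperty1} still holds locally.

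Next I would check that the new network still displays $N$: deleting the freshly added reticulate edge $(s,\cdot)$ and suppressing degree-two nodes recovers $N$ exactly, because the surgery only \emph{added} an alternative parent rather than removing any original edge. Iterating this peeling over all cherries of all complex components, each step strictly decreases the total number of leaves lying in complex components (the quantity $t$ in the proof of Theorem~\ref{theoremCherry}), so the process terminates; at termination every component is a singleton path, giving $n-1$ reticulations, and $N$ is displayed in the resulting network by composing the one-edge deletions across all steps.

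I expect the main obstacle to be the acyclicity bookkeeping. Each insertion introduces a new reticulate edge from some tail $s$, and naively chosen tails could create a directed cycle once several surgeries interact, or could place a tail below a node it must feed. The clean way around this is to fix a topological order of the components up front (as already used at the end of the proof of Theorem~\ref{theoremCherry}) and always route each new reticulate edge from \emph{left to right} in that order — choosing the tail $s$ strictly earlier than the detached leaf's new component. Verifying that this choice is always available, and that inserting edges one complex component at a time (processing components in topological order and never revisiting a finished one) keeps both acyclicity and tree-childness invariant, is the technical heart of the argument; everything else is routine local edge manipulation and degree-two suppression.
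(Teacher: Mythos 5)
Your strategy---repeatedly inserting a single new reticulate edge to split a complex tree component, checking tree-childness and acyclicity after each insertion, and iterating until every component is a singleton path---is the same as the paper's proof. But one step fails as written: you peel a leaf $b$ off a cherry $(a,b)$ ``with common parent $w$, which exists since $C$ is a binary tree with at least two leaves.'' A complex tree component is \emph{not} a binary tree: a node of $C$ may have only one child inside $C$, its other child being a reticulate node outside $C$, so two leaves of $C$ need not share a parent. For instance, if the component root has a child $u$ whose children are a leaf $x$ and a node $v$, and $v$'s children are a leaf $y$ and a reticulate node, then $C$ is complex but contains no cherry. The repair is easy and is essentially what the paper does: pick \emph{any} node $v$ of $C$ with two children $v_1,v_2$ in $C$ and subdivide the outgoing edge $(v,v_1)$; the piece split off is then a whole subtree rather than a single leaf, and termination follows because each insertion raises the number of tree components by one and there can be at most $n$ of them. (Note that your termination measure, the number of leaves lying in complex components, need not strictly decrease under this variant, so the component count is the right quantity to track.)

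On the issue you call the technical heart, the acyclicity bookkeeping: the paper makes it disappear by placing the source of the new reticulate edge on the edge $(r_C, c(r_C))$ at the top of the \emph{same} component $C$. The source is then an ancestor of the target, and adding an edge from an ancestor to (a subdivision node just above) one of its descendants can never close a directed cycle, so no topological order or left-to-right routing is needed. In your leaf-peeling version the worry is even emptier: the new reticulate node's only descendant is the leaf $b$, which has no outgoing edges, so no directed path can return from it to any source whatsoever. The only real constraint is the one you did state: the source must subdivide a \emph{tree} edge, since subdividing a reticulate edge would create a node whose two children are both reticulate, violating the tree-child property. With these two repairs your argument coincides with the paper's.
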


\begin{proof}
Suppose $N$ has one or more complex tree components. We add reticulate edges to make each of such components to become a union of multiple non-complex tree components using the following process.
 Each complex component $C$ has a root node $r_C$ and has internal nodes $v$ that have two children $v_1$ and $v_2$, where $v$ is selected arbitrarily. We create a reticulate edge  $(r_C, c(r_C))$ into an outgoing edge of $v$, say $(v, v_1)$. Note that this would make $v_1$ into the child of the root of a new component, which consists of the parent of $v_1$, $v_1$ and the descendants of $v$ in $C$. 
 Note that $v_1$ remains as  a tree node. Also, adding the reticulate edge don't create any cycle since  $r_c$ is an ancestor of $v$ in $N$.  Since $v_2$ remains as a tree-node and as a child of $v$, the resulting network is tree-child but has one more tree component. 
We repeat this procedure until there is no complex tree component. Since the tree-child property is kept, the procedure produces a tree-child network with no complex tree components.
\end{proof}

Using the above lemma, we are able to prove the following necessary condition that is stronger than one in Theorem \ref{theoremCherry}.

\begin{theorem}
\label{theoremGenCherry}
Suppose a set of trees $\mathcal{T}$ on an $n$-taxa set $\mathcal{X}$ are co-displayed in a tree-child network.  Let $c_{\calX'}(\calT)$ denote the number of cherries that consist of taxa in $\calX'$ for a subset $\calX '\subseteq \calX$.
Then  for every subset $\mathcal{X}' \subseteq \mathcal{X}$ such that $\vert \calX'\vert \geq 2$,
    $c_{\calX'}(\calT)\leq 2\vert \calX'\vert -3$.
\end{theorem}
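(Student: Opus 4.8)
The plan is to reduce the subset statement to the global cherry bound of Theorem~\ref{theoremCherry} by restricting attention to the induced behavior on $\calX'$. The key observation is that $c_{\calX'}(\calT)$ counts only those cherries in the trees of $\calT$ whose \emph{both} leaves lie in $\calX'$. Every such cherry, being a cherry in some $T\in\calT$ that is co-displayed in the tree-child network $N$, is a displayed cherry of $N$ in the sense defined before Lemma~\ref{lemma1}. So it suffices to bound, by $2|\calX'|-3$, the number of displayed cherries of $N$ whose two leaves both belong to $\calX'$.

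The approach I would take is to \emph{contract} $N$ down to a tree-child network on the taxon set $\calX'$ and then invoke Theorem~\ref{theoremCherry} on that smaller network. Concretely, first apply Lemma~\ref{lemmaSingleton} to replace $N$ by a tree-child network $N^{*}$ on $\calX$ with $n-1$ reticulations, all of whose tree components are singletons (paths ending in a single leaf); since $N$ is displayed in $N^{*}$, every displayed cherry of $N$ is also a displayed cherry of $N^{*}$, so it is enough to work with $N^{*}$. Then I would delete from $N^{*}$ every leaf not in $\calX'$ together with the pendant structure that becomes unlabeled, and suppress degree-two vertices, obtaining a phylogenetic network $N'$ on exactly $\calX'$. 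The central claim is that $N'$ can be taken to be tree-child on $|\calX'|$ taxa, and that any cherry $(a,b)$ with $a,b\in\calX'$ that is displayed in $N^{*}$ remains displayed in $N'$. Granting this, Theorem~\ref{theoremCherry} applied to $N'$ gives at most $2|\calX'|-3$ displayed cherries on $\calX'$, which dominates $c_{\calX'}(\calT)$ and finishes the proof.

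The main obstacle, as I see it, is verifying that the restriction operation preserves both the tree-child property and the displayed cherries. When leaves outside $\calX'$ are removed, some reticulate nodes may lose one or both of their children's subtrees, and the suppression of degree-two nodes can turn former reticulate nodes into ordinary tree nodes or merge edges in ways that must not destroy any cherry $(a,b)\subseteq\calX'$. The delicate point is to argue that the edge-deletion pattern witnessing $(a,b)$ as a displayed cherry in $N^{*}$ descends to a valid edge-deletion pattern in $N'$; here the singleton-component structure guaranteed by Lemma~\ref{lemmaSingleton}, together with the three-case analysis of Lemma~\ref{lemma1}, is what makes the bookkeeping tractable, since every displayed cherry either sits inside one path-component or straddles two neighboring ones through a single reticulate edge. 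I would verify that suppressing unlabeled material never forces two surviving reticulate edges into conflict and never creates a nonleaf node all of whose children are reticulate.

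An alternative, cleaner route avoids the explicit surgery: run the counting argument of the proof of Theorem~\ref{theoremCherry} verbatim, but count only cherries with both leaves in $\calX'$. In the $t=0$ analysis one orders the components topologically and charges each displayed $\calX'$-cherry to a reticulate edge; restricting to components whose leaf lies in $\calX'$ and noting that at least one relevant reticulate node has both parents to its left yields the same $2|\calX'|-3$ ceiling. I would present the contraction argument as the primary proof and remark that the direct recount gives an independent check, since the recount sidesteps exactly the tree-child-preservation issue that is the hard part of the surgery approach.
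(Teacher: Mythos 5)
Your primary route (restrict $N^{*}$ to $\calX'$ and apply Theorem~\ref{theoremCherry} to the result) has a genuine gap at exactly the point you flag but do not resolve: the restricted network $N'$ need not be tree-child, and with the singleton structure produced by Lemma~\ref{lemmaSingleton} the failure is generic rather than exceptional. In $N^{*}$ every tree component is a path, so every internal node of such a path has two children: its successor on the path and a reticulate node. Now delete a leaf $\ell\notin\calX'$. The bottom node of the path $P_\ell$ loses its leaf child, becomes a node of indegree one and outdegree one, and is suppressed; after the suppression its predecessor on the path has two children that are \emph{both} reticulate nodes, so $N'$ violates the tree-child condition. Since Theorem~\ref{theoremCherry} is a statement only about tree-child networks --- unconstrained networks on $\vert\calX'\vert$ taxa can display far more than $2\vert\calX'\vert-3$ cherries --- it cannot be invoked on $N'$. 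Repairing $N'$ would require further surgery on reticulate edges together with a proof that every displayed $\calX'$-cherry survives that surgery; that is the actual mathematical content of the theorem, and the proposal leaves it as an unverified claim (``I would verify that\dots'').

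The ``alternative, cleaner route'' that you relegate to an independent check is the sound argument, and it is essentially the paper's own proof: after Lemma~\ref{lemmaSingleton}, stay inside the network on $\calX$. The paper rewires any reticulate edge entering an $\calX'$-singleton whose source lies outside the $\calX'$-singletons so that it originates from an $\calX'$-singleton (observing that this cannot decrease the number of displayed $\calX'$-cherries), and then repeats the count of Theorem~\ref{theoremCherry} on the $\calX'$-singletons in topological order: the first can form no $\calX'$-cherry, the second at most one distinct $\calX'$-cherry (even if both of its incoming reticulate edges come from the first), and each subsequent one at most two, giving $0+1+2\bigl(\vert\calX'\vert-2\bigr)=2\vert\calX'\vert-3$. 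So the fix is to swap the roles in your write-up: make the direct recount the proof and drop the contraction argument entirely.
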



\begin{proof} 
Let the trees of $\cal T$ be co-displayed in a tree-child network $N$. Without loss of generality, by Lemma~\ref{lemmaSingleton}, we may assume that $N$ contains $n-1$ reticulate nodes and all its tree components contain only one taxa.

Recall that we assume that $N$ has $n$ singleton tree components and has $n-1$ reticulate nodes. 
We now count the distinct displayed cherries consisting only of the taxa of $\mathcal{X}'$. 
Note that cherries can only be formed by connecting two of $n$ singletons.
To have as many cherries on $\calX'$,  we should connect singletons
with taxa from $\mathcal{X}'$ as much as possible.
More specifically, if a reticulate edge of a singleton with leaf $x \in \mathcal{X}'$
is not originated from another singleton with leaf $x' \in \mathcal{X}'$, then
we simply change the source of this reticulate edge to be from some taxon in $\mathcal{X}'$. This won't reduce the number of cherries on $\calX'$.
We follow the exactly the same reasoning as the proof in Theorem \ref{theoremCherry}:
the singletons from $\mathcal{X}'$ cannot form cycles and so have a topological order; each singleton can create at most two distinct cherries in $\mathcal{X}'$;
 the first singleton in the topological order cannot form cherries within $\mathcal{X}'$ (note that it is possible this first singleton can form cherries with taxa outside $\mathcal{X}'$ but these don't affect the number. 
and the second singleton in this order can only form one cherry within $\mathcal{X}'$.
Therefore,  for any $\mathcal{X}' \subseteq \mathcal{X}$,
 $  c_{\calX'}(\calT) \le    2\vert {\cal T'}\vert  -3    $
\end{proof}


Theorem \ref{theoremGenCherry} can be used to no co-display of multiple trees  when Theorem \ref{theoremCherry} fails.  For example, we consider the tree set $\calT$ consisting of the following six trees on five taxa:
\begin{quote}
 $((((1,2),3),4),5), \;\; ((((1,3),2),4),5), \;\;
 ((((1,4),5),2),3),$  \\
 $((((2,3),1),4),5), \;\;
  ((((2,4),5),1),2), \;\;
((((3,4),1),2),5).$
\end{quote}
Since $c_{\calX}(\calT)=6 < 2\times 5-3$, $\calT$ satisfies the necessary condition given in Theorem \ref{theoremCherry}. However, 
for the subset consisting of the taxa $1,2,3$ and $4$, we have
$c_{\{1, 2, 3, 4\}}(\calT)=6> 5=2\times 4-3$, implying that the six trees cannot be co-displayed in a tree-child network according to Theorem \ref{theoremGenCherry}. 
\\

\noindent {\bf Remarks}
1.) We emphasize that the necessary condition in Theorem \ref{theoremGenCherry} is for co-display of multiple trees in a tree-child network. It is not for co-displaying the trees as the binary spanning trees on a subset of taxa.
In fact, 
it is known that any set of trees on $\calX$ can be displayed in a phylogenetic network on $\calX$ (see \cite{Francis_15,Zhang_16} for example). 
It is easy to see that any phylogenetic network with $r$ reticulate nodes can be extended into a tree-child network by attaching at most $r-1$ leaves labeled with new taxa. Taken together,  the two facts imply that for any set  of trees on $\calX$, there is a tree-child network on a taxa set $\calX' \supseteq \calX$ that displays all the given trees as spanning trees over $\calX$.

2.) The tree-child network in Figure~\ref{fig:ExNet1} has exactly 
$2n-3$ cherries. Let $N$ be a tree-child network on $\calX$ that displays exactly $2n-3$ cherries. Since each cherry has two taxa, each taxa appears on average in $4-6/n$ cherries. This implies that there exists a taxa 
$x$ that appears in  $4$ or more cherries and thus the network displays at most $2n-3-4$ cherries in $\calX\setminus \{x\}$. In summary, 
there is no tree-child network on a $n$-taxa set that displays $2n-3$  cherries in total and also displays  $2n-5$ cherries when restricted to each $(n-1)$-taxa $\mathcal{X}'$ of $\mathcal{X}$.

\subsubsection{Larger tree structures}
\label{sectLargerTrees}

So far we have focused on cherries in the input trees. We now consider larger topological structures in $\mathcal{T}$. We define $k$-subtree to be the subtree of some $T \in \mathcal{T}$ where the number of taxa in the subtree is $k$, for $k \ge 2$. That is, $k$-subtree is a rooted binary tree with $k$ leaves. Note that a $k$-subtree must match topologically exactly a subtree rooted at a node of some $T \in \mathcal{T}$. We do not allow removal of taxa when considering $k$-subtrees. A cherry is a $2$-subtree. 

We first consider the case of $3$-subtree. There can be $3{n \choose 3} = O(n^3) $ \emph{distinct} $3$-subtrees. Proposition \ref{prop3Tree} shows that, to allow a tree-child network, the maximum number of distinct $3$-subtrees is quadratic to $n$.

\begin{proposition}
\label{prop3Tree}
If $\mathcal{T}$ allows a tree-child network, then the maximum number of distinct $3$-subtrees in $\mathcal{T}$ is $O(n^2)$, which is asymptotically tight.
\end{proposition}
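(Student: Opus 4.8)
The plan is to prove the two directions separately: the $O(n^2)$ upper bound falls out of Theorem~\ref{theoremCherry} almost for free, while asymptotic tightness needs an explicit tree-child network. For the upper bound, the key observation is that every $3$-subtree carries a cherry. If a $3$-subtree $((a,b),c)$ is displayed in a tree-child network $N$ co-displaying $\mathcal{T}$, then in the displayed tree realizing it the taxa $a$ and $b$ are siblings, so the cherry $(a,b)$ is itself displayed in $N$. Thus each displayed $3$-subtree is determined by a displayed cherry $(a,b)$ together with a third taxon $c\in\calX\setminus\{a,b\}$. By Theorem~\ref{theoremCherry} there are at most $2n-3$ displayed cherries, and at most $n-2$ choices for $c$, so the number of distinct displayed $3$-subtrees, and in particular the number occurring in $\mathcal{T}$, is at most $(2n-3)(n-2)=O(n^2)$.

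For tightness I would exhibit a single tree-child network whose displayed trees already contain $\Theta(n^2)$ distinct $3$-subtrees; taking one displayed tree per subtree then yields a co-displayable family. Concretely, on taxa $z,\ell_1,\dots,\ell_{n-1}$ build $N$ from one long directed path $\rho=b_1\to b_2\to\cdots\to b_{n-1}\to a_1\to a_2\to\cdots\to a_{n-1}\to z$ together with $n-1$ reticulations $r_1,\dots,r_{n-1}$, where $r_t$ has the two parents $b_t$ and $a_t$ and the single child $\ell_t$. One checks routinely that $N$ is acyclic (every reticulation edge points from the spine down into a sink) and tree-child (each $b_t$ and each $a_t$ keeps a tree child along the spine, and each $r_t$ has the leaf $\ell_t$ as a tree child), using exactly $n-1$ reticulations, the maximum allowed by Proposition~\ref{Prop2_retNum}. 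The mechanism I would exploit is the contraction of degree-two nodes: switching $r_t$ to its high parent $b_t$ deletes the edge $a_t\to r_t$, so $a_t$ loses its side branch and is contracted, effectively removing $\ell_t$ from the bottom comb. Given any $j<i$, choosing the low parent $a_i$ for $r_i$ and $a_j$ for $r_j$ while sending every $r_t$ with $j<t\le n-1$ and $t\neq i$ to its high parent produces a displayed tree in which $\ell_i$ is the sibling of $z$ and $\ell_j$ is the sibling of the cherry $(z,\ell_i)$; that is, the node $a_j$ roots exactly the $3$-subtree $((z,\ell_i),\ell_j)$. As $\{i,j\}$ ranges over all $\binom{n-1}{2}$ pairs these triples $\{z,\ell_i,\ell_j\}$ are pairwise distinct, giving $\binom{n-1}{2}=\Theta(n^2)$ distinct displayed $3$-subtrees and matching the upper bound.

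The main obstacle is the tightness half, and specifically reconciling two competing constraints: a tree-child network on $n$ taxa has at most $n-1$ reticulations, so one cannot attach an independent routing gadget to each of quadratically many triples. The idea that resolves this is that a fixed cherry need not have a fixed sibling: because degree-two nodes are contracted, a single spine emitting $\Theta(n)$ reticulation edges lets the bottom cherry $(z,\ell_i)$ be completed by $\Theta(n)$ different third leaves $\ell_j$, so $\Theta(n)$ reticulations already generate $\Theta(n^2)$ triples. The details needing care are then verifying the tree-child property, confirming acyclicity, and checking that the prescribed choice of reticulation parents realizes the intended pair $(i,j)$ cleanly, with the remaining leaves routed harmlessly onto the $b$-path above $a_j$ so that $a_j$ has exactly $z,\ell_i,\ell_j$ as descendants.
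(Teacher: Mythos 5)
Your proposal is correct and follows essentially the same route as the paper: the upper bound is obtained identically by combining Theorem~\ref{theoremCherry} (at most $2n-3$ cherries, each $3$-subtree containing exactly one cherry) with the at most $n-2$ choices of a third taxon. For tightness the paper simply reuses the network of Figure~\ref{fig:ExNet1}, extracting the $3$-subtrees $((1,a),b)$ and $((a,n),b)$ for $2\le a<b\le n-1$; your double-spine network with reticulations $r_t$ having one high parent and one low parent is the same mechanism in a slightly different (and equally valid) packaging, yielding $\binom{n-1}{2}$ subtrees $((z,\ell_i),\ell_j)$ instead of $(n-2)(n-3)$.
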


\begin{proof}
By Theorem \ref{theoremCherry}, there are at most $2n-3$ distinct cherries in $\mathcal{T}$. Note that each $3$-subtree contains exactly one cherry. For each distinct cherry, we can form a distinct $3$-subtree by adding a third taxon. There are at most $n-2$ choices for the third taxon and there are at most three distinct $3$-subtrees can be formed by each third taxon. Therefore, the maximum number of distinct $3$-subtrees is:

\[  2 (2n-3)(n-2)  = O(n^2)  \]
 
We now show that the $O(n^2)$ bound is asymptomatically tight. We use the same network in Fig. \ref{fig:ExNet1}. 
In this network, there is a $3$-subtree $((1,a),b)$ and also a $3$-subtree $((a,n),b)$ for each $2 \le a < b \le n-1$. Thus there are at least $(n-2)(n-3) = O(n^2)$ distinct $3$-subtrees. 
\end{proof}

\noindent {\bf Remarks}
One can consider larger $k$. By following the same logic, for any $k \ge 3$,  we can build a distinct $k$-subtree from a distinct $k-1$-subtree and then adding one more taxon. There are $2k-3$ ways of inserting the last taxon. 
We can obtain similar bound for any $k$-subtree.
If $\mathcal{T}$ allows a tree-child network, for any $k \ge 3$, it can be shown that there are at most $(2k-3)!! (2n-3) (n-2)(n-3) \ldots (n-k+1)$ distinct $k$-subtrees.
When $k$ increases, this bound in increases quickly and becomes less useful unless the number of trees is large.

\subsection{Four-taxa condition}

As shown in Section \ref{sectLargerTrees}, the number of distinct of $k$-subtrees (say $4$-subtrees) doesn't provide a very strong necessary condition for the existence of tree-child network. We now show that a simple pattern of $3$-subtrees can be used to determine whether a tree-child network exists for $\mathcal{T}$.

\begin{theorem}
\label{theoremFourTaxa}
[Four-taxa condition] 
Let $\calT$ be a set of trees. Suppose there are the following $3$-subtrees in $\mathcal{T}$: $((a,b),d)$, $((a,c),d)$ and $((b,c),d)$ for some $\{a,b,c,d\}$. Then there is no tree-child network that can co-display $\mathcal{T}$. 
\end{theorem}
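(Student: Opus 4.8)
The plan is to argue by contradiction. Suppose some tree-child network co-displays $\mathcal{T}$; then it displays trees $T_1,T_2,T_3\in\mathcal{T}$ containing the \emph{exact} $3$-subtrees $((a,b),d)$, $((a,c),d)$ and $((b,c),d)$, respectively. By Lemma~\ref{lemmaSingleton} I may assume the displaying network $N$ has only singleton tree components, each a directed path from its root (the network root or a reticulate node) down to a single leaf; this only enlarges the set of displayed trees, so $N$ still displays $T_1,T_2,T_3$ together with those exact subtrees. Since $(a,b),(a,c),(b,c)$ are cherries of $T_1,T_2,T_3$, they are displayed cherries of $N$, so Lemma~\ref{lemma1} applies to each — and because every component is now non-complex, only its type-$2$ alternatives occur. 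The orientations it provides (the reticulate edge realizing a cherry runs from the earlier to the later of the two components in the acyclic order) give a linear order on the components of $a,b,c$, which I normalize to $a\prec b\prec c$. Then the two parents of $r_c$ are exactly $p_a:=\mathrm{lca}_N(a,c)\in C_a$ and $p_b:=\mathrm{lca}_N(b,c)\in C_b$, and $r_b$ receives an edge from $p'_a:=\mathrm{lca}_N(a,b)\in C_a$.

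The decisive step is to exploit \emph{exactness} of the $3$-subtrees rather than merely the cherries or the induced rooted triples. In the display of $T_2$, the node carrying the clade $\{a,c,d\}$ has the cherry $(a,c)$ on one side and \emph{exactly} the leaf $d$ on the other. In a singleton network, the branch leaving $a$'s lineage toward $d$ enters the root of some component, whose unique leaf is then unavoidably a descendant on that side (its path to the leaf is made of tree edges, which are never deleted); for that side to be $d$ alone, the branch must enter $r_d$ itself. Hence the node $w_2$ where $d$ departs from $a$'s lineage is a \emph{parent of} $r_d$, lying strictly above the cherry-apex $p_a$. The same analysis on $T_1$ and $T_3$ yields two further parents of $r_d$: from $T_1$ a parent $w_1\in C_a$ strictly \emph{below} $p_a$ (in $T_1$ the taxon $c$, being outside the clade $\{a,b,d\}$, must re-enter $a$'s lineage through $p_a$, above where $d$ attaches), and from $T_3$ a parent $w_3\in C_b$.

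The contradiction is then a counting-plus-acyclicity argument at the indegree-two node $r_d$. Its only two parents must be $w_1\in C_a$ and $w_3\in C_b$, which are distinct. Neither is an ancestor of $p_a$: the node $w_1$ sits strictly below $p_a$ on the path $C_a$, while $w_3\in C_b$ cannot reach $C_a$ because the edge $p'_a\to r_b$ already forces a directed path $C_a\to C_b$ and $N$ is acyclic. Yet the analysis of $T_2$ demands a parent of $r_d$ that is an ancestor of $p_a$. No such parent exists, so $N$ cannot display all three trees.

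I expect the main obstacle to be justifying the ``exactly $\{d\}$ forces a direct edge into $r_d$'' step cleanly, and making clear why exactness is indispensable: the three induced triples $ab|d$, $ac|d$, $bc|d$ are mutually compatible, and the three cherries only meet the bound $2\cdot3-3$ of Theorem~\ref{theoremCherry}, so neither earlier condition excludes this configuration. What actually does the work is that any attempt to attach $d$ flexibly by routing it through an auxiliary component drags that component's leaf into the clade and destroys exactness; pinning this down is where the care is needed, along with verifying that the singleton reduction of Lemma~\ref{lemmaSingleton} keeps the displayed trees, and hence their exact subtrees, intact.
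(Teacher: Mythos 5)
Your reduction to singleton components, the use of Lemma~\ref{lemma1} to pin down both parents of $r_c$ and one parent of $r_b$, and your analysis of $T_1$ (forcing $w_1\in C_a$ strictly between $p'_a=\mathrm{lca}_N(a,b)$ and $p_a=\mathrm{lca}_N(a,c)$, hence an edge $C_a\to r_d$) are sound. The genuine gap is the claim that $T_3$ likewise yields a parent $w_3\in C_b$ of $r_d$. Your key step --- ``the branch toward $d$ must enter $r_d$ itself'' --- silently assumes that $d$ joins the $b$-lineage as a \emph{side branch} through a reticulate edge. But there is a second way to realize the exact $3$-subtree $((b,c),d)$: the $b$-lineage itself may exit $r_b$ \emph{into} the component $C_d$. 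Concretely, $r_b$ has a second parent $q$ besides $p'_a$, and nothing you have established prevents $q\in C_d$; this needs only $C_a\prec C_d\prec C_b$ in the topological order, which is fully compatible with the edge $w_1\to r_d$. In the display of $T_3$, retaining $q\to r_b$ and $p_b\to r_c$ makes $q$ itself the root of $((b,c),d)$: its reticulate child carries the cherry $(b,c)$, and its path child runs down $C_d$ by tree edges to the leaf $d$. No reticulate edge into $r_d$ is used, so no $w_3$ exists; $r_d$ then has exactly the two parents $w_1$ and $w_2$, and your counting argument produces no contradiction. Note that your own anticipated patch (``routing $d$ through an auxiliary component drags that component's leaf into the clade and destroys exactness'') fails precisely here: the auxiliary component is $C_d$ itself, whose leaf \emph{is} $d$, so exactness is preserved rather than destroyed.

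This missing configuration is exactly Case~2 of the paper's proof ($P_d$ between $P_a$ and $P_b$), and it requires a different argument, not a repair of the parent count: once $q\in C_d$ is a parent of $r_b$, both entering edges of $r_b$ (and of $r_c$) are spoken for, and the display of $T_2$ becomes impossible --- $b$ cannot be placed outside the clade $\{a,c,d\}$, since retaining $p'_a\to r_b$ puts $b$ inside the cherry $(a,c)$ (your $T_1$ analysis shows $p'_a$ lies below $p_a$), while retaining $q\to r_b$ puts $b$ below $r_d$ and hence inside the $d$-side of the $3$-subtree. A secondary ordering issue: you derive $w_2$ before $w_1$, but the argument for $w_2$ has the same loophole ($a$'s lineage passing through $C_d$, possible when $C_d\prec C_a$), and it is closed only by the conclusion $C_a\prec C_d$ coming from the $T_1$ analysis; so $w_1$ must be established first. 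Once the $T_3$ case is repaired along the paper's lines, your parent-counting argument does cover the remaining configurations (the paper's Cases~1 and~3) in a more unified way than the paper's explicit case analysis, but as written the proof is incomplete.
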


\begin{proof}
Suppose that $\calT$ is a set of trees satisfying the four-taxa condition and co-displayed in a tree-child network $N$.  
By Lemma \ref{lemmaSingleton}, we can assume  $N$ to have  $n$ non-complex components, which are each a path ending with a leaf. 
For convenience, the component containing a leaf $\ell$ is denoted by $P_\ell$. 
Recall that the $n$ components of $N$ can be ordered topologically such that each reticulate edge goes from left to right. 
By symmetry, we assume that the components $P_a, P_b, P_c$ are listed from left to right. Since there are cherries $(a,c)$ and $(b,c)$, the two reticulate edges going into $P_c$ must be from $P_a$ and $P_b$. Also there is a reticulate edge from  $P_a$ to the component $P_b$. We prove by cases the impossibility of placing $P_d$ in the topological order.

\begin{enumerate}
    
    \item[Case 1.] The  $P_d$ appears after $P_b$. Then there are reticulate edges from  $P_a$ and  $P_b$ into  $P_d$ to form the specified three $3$-subtrees over $a,b,c$ and $d$. Also the source node (within the component $P_a$) of reticulate  edge from  $P_a$ to  $P_d$ must be above the source node of the reticulate edge from $P_a$ to $P_c$. However, such ordering cannot realize the 3-subtree $((a,b),d))$ because $c$ can only be reached by the reticulate edge from $P_a$ to $P_c$ which implies $c$ should be within (not outside) the $3$-subtree $((a,b),d)$.
    
    
    \item[Case 2.] The $P_d$ appears between $P_a$ and  $P_b$. In this case, to form the $3$-subtree $((b,c),d)$, there is a reticulate edge from $P_d$ to $P_b$. That is, there is no more free reticulate edge going into  $P_b$ (and also $P_c$). We consider the relative positions of $\mbox{\rm lca}_N(a, b)$ and $\mbox{\rm lca}_N(a, c)$ (both within $P_a$). First suppose $\mbox{\rm lca}_N(a, b)$ is above $\mbox{\rm lca}_N(a, c)$. Then the $3$-subtree $((a,b),d)$ cannot be obtained because there is no free reticulate edge into $P_c$ (and so $c$ cannot be placed outside this $3$-subtree). Similarly, if $\mbox{\rm lca}_N(a, c)$ is above $\mbox{\rm lca}_N(a, b)$, the the $3$-subtree $((a,c),d)$ cannot be obtained because there is no free reticulate edge into $P_b$.
    
    
    \item[Case 3.] The $P_d$ appear before $P_a$. Since $P_d$ is the leftmost component among the four components under consideration, the root of the $3$-subtree $((a,b),d)$ is located within $P_d$. To display this $3$-subtree, we must follow a reticulate edge from $P_d$ to $P_a$ and a reticulate edge from $P_a$ to $P_b$ (and this excludes the other reticulate edges into $P_a$ and $P_b$). Moreover, reticulate edges into $P_c$ cannot be followed since $c$ is outside this $3$-subtree. However, note that $c$ is only accessible from $P_a$ and $P_b$. Then $c$ cannot be included in the whole tree to display, which leads to a contradiction.
.
\end{enumerate}
\end{proof}

It is easy to construct a set of three trees on $\{1, 2, \cdots,  n\}$ that contains $3$-subtrees $((1,2),4)$, $((1,3),4)$ and $((2,3),4)$ for any $n \ge 4$.  Thus, we have the following fact.

\begin{corollary}
\label{corollary3Trees2}
For any $n\geq 4$, there exist three trees on $n$ taxa that cannot be co-displayed in any tree-child network.
\end{corollary}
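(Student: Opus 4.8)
The plan is to derive the statement as an immediate consequence of the Four-taxa condition (Theorem~\ref{theoremFourTaxa}). That theorem rules out co-display as soon as the collection of trees contains the three $3$-subtrees $((a,b),d)$, $((a,c),d)$ and $((b,c),d)$ for one common quadruple $\{a,b,c,d\}$. Hence, for every $n\ge 4$, it will be enough to construct three trees on $\{1,2,\ldots,n\}$ that together realize exactly this pattern on the fixed quadruple $\{1,2,3,4\}$, and then invoke the theorem.

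I would build one caterpillar tree for each of the three cherries on $\{1,2,3\}$. For $T_1$ I take the innermost clade to be $((1,2),4)$ and then attach the remaining taxa $3,5,6,\ldots,n$ one at a time along the spine in any order; for $T_2$ the innermost clade is $((1,3),4)$ with $2,5,6,\ldots,n$ attached along the spine; and for $T_3$ the innermost clade is $((2,3),4)$ with $1,5,6,\ldots,n$ attached along the spine. By construction, taxon $4$ sits immediately above the chosen cherry in each of the three trees.

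The step that needs care is verifying that each tree contains the intended $3$-subtree in the strict sense of the definition in Section~\ref{sectLargerTrees}: a $3$-subtree is a clade on exactly three taxa, obtained without deleting any other taxa. In $T_1$, the node that is the common parent of the cherry $(1,2)$ and the leaf $4$ has precisely the descendant-leaf set $\{1,2,4\}$, since every other taxon is attached strictly higher up the spine; thus the subtree rooted there is exactly $((1,2),4)$. The same argument shows that $T_2$ contains $((1,3),4)$ and $T_3$ contains $((2,3),4)$ as genuine $3$-subtrees.

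With these three $3$-subtrees present in $\mathcal{T}=\{T_1,T_2,T_3\}$, Theorem~\ref{theoremFourTaxa} applies directly with $\{a,b,c,d\}=\{1,2,3,4\}$ and certifies that no tree-child network co-displays $\mathcal{T}$. I do not expect a genuine obstacle here; the only pitfall is the one just flagged, namely ensuring that each required pattern appears as a true rooted clade on three taxa rather than merely as an induced topology after pruning. This is exactly why the construction places taxon $4$ directly above each cherry and keeps all remaining taxa outside that clade.
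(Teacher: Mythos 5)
Your proposal is correct and follows exactly the paper's route: the paper also derives this corollary directly from the Four-taxa condition (Theorem~\ref{theoremFourTaxa}), simply noting that three trees containing the $3$-subtrees $((1,2),4)$, $((1,3),4)$ and $((2,3),4)$ are easy to construct. Your explicit caterpillar construction, with taxon $4$ placed immediately above each cherry so the pattern is a genuine rooted $3$-subtree without pruning, is a valid instantiation of what the paper leaves as an exercise.
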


\section{Two trees can always be  co-displayed}
\label{sec4}

We now switch to which trees can be co-displayed in a tree-child network. It appears that sufficient conditions are difficult to obtain.
Here,  we show that every pair of trees can always be co-displayed in a tree-child network. 



\begin{theorem}
\label{theorem2Tree}
There exists a tree-child network on $\calX$ to display $T_1$ and $T_2$ for any pair of trees $T_1$ and $T_2$ on $\calX$. 
\end{theorem}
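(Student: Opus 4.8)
The plan is to give an explicit constructive argument: given any two trees $T_1$ and $T_2$ on $\calX$, I would build a tree-child network $N$ that displays both simultaneously. The natural starting point is to take $T_1$ as the ``backbone'' of the network and then add reticulate edges that encode the differences between $T_2$ and $T_1$. Concretely, I expect to use the tree-component picture developed in Section~2: a tree-child network decomposes into tree components, and by Lemma~\ref{lemmaSingleton} it suffices (and is convenient) to aim for a network whose components are all singletons, i.e.\ a network with $n-1$ reticulations in which the displayed trees are recovered by selecting, at each reticulate node, one of its two entering edges. So the target object is a linear ``caterpillar-like'' arrangement of $n$ singleton components $C_1,\dots,C_n$ in topological order, where each $C_i$ (for $i\ge 2$) is a reticulate node feeding a single leaf, and the two entering edges of each reticulate node are chosen so that one switch-setting reconstructs $T_1$ and another reconstructs $T_2$.

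The key steps, in order, would be: (i) fix a common topological order of the taxa by processing the two trees together, for instance by repeatedly extracting cherries or by an ordering compatible with both trees' ancestral structure; (ii) for each leaf, determine its parent in $T_1$ and its parent in $T_2$ (after suitable relabeling of internal attachment points to earlier taxa in the order), and install exactly these two as the sources of the two reticulate edges entering that leaf's component; (iii) verify that the switch-setting that keeps all the ``$T_1$-edges'' contracts down to $T_1$, and symmetrically for $T_2$; and (iv) verify the two global invariants that make this work---acyclicity and the tree-child property. For acyclicity I would argue that all reticulate edges respect the fixed topological order (every edge goes from an earlier component to a later one), so no directed cycle can form. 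For the tree-child property I would check that every reticulate node still has a tree-node child (its leaf) and every tree node on the backbone retains a tree child, which the singleton-component construction guarantees since each component ends in a genuine leaf.

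The main obstacle I anticipate is step~(ii): reducing the two trees to a common linear order in such a way that each leaf's parent in $T_1$ and in $T_2$ can both be realized as \emph{single} entering reticulate edges from \emph{already-placed} components. A tree vertex is an internal node that may be the parent of many leaves, whereas a reticulate node accepts only one incoming edge per leaf; so I must represent each internal attachment of $T_1$ (and of $T_2$) by pointing at a canonical taxon descending from that node---e.g.\ the smallest-index leaf below it in the chosen order---and then argue that this indirect attachment still contracts correctly to the right tree. Making the two trees' attachment choices simultaneously consistent with one fixed order, without creating a component that needs three or more distinct incoming edges, is the delicate combinatorial heart of the proof. I expect this to be handled by an induction on $n$: remove a leaf that is a cherry-partner (or an otherwise ``peelable'' taxon) in a way controlled in both trees at once, build the network for the smaller instance by the inductive hypothesis, and then re-attach the removed leaf as a new final component with its two entering edges drawn from the two (possibly distinct) components determined by its positions in $T_1$ and $T_2$. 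The inductive step must be arranged so that re-attachment preserves both acyclicity (append at the end of the order) and tree-childness (the new leaf is a fresh tree child), which I expect to follow cleanly once the ordering in step~(i) is set up correctly.
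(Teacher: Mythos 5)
Your target object is exactly the network the paper constructs, and your ``canonical taxon = smallest-index leaf below an internal node'' device is precisely the paper's mechanism; so the approach is the right one. The problem is that your proposal stops exactly where the proof has to be done: you label steps (i)--(ii) the ``delicate combinatorial heart'' and defer them to an induction you only expect to work. In fact both obstacles you anticipate are non-issues once the construction is phrased directly, which is what the paper does. Decompose each $T_i$ into paths $P_{ik}$ by processing the taxa in an \emph{arbitrary} fixed order $1,\dots,n$: $P_{i1}$ is the path of ancestors of leaf $1$, and $P_{ik}$ consists of the ancestors of leaf $k$ not used by leaves $1,\dots,k-1$ (equivalently, each internal node of $T_i$ is assigned to its smallest descendant leaf). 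The $k$-th component $Q_k$ of $N$ is the concatenation of a reticulate node $r_k$ (for $k\ge 2$), then $P_{1k}$, then $P_{2k}$, then leaf $k$; for each $i$ one reticulate edge enters $r_k$ from the copy of the $T_i$-parent of the first node of $P_{ik}$ (or from $p_i(k)$ if $P_{ik}$ is empty). That parent lies in some $P_{ij}$ with $j<k$, so every reticulate edge runs from a lower-indexed to a higher-indexed component and acyclicity is automatic --- no special common order has to be engineered, contrary to your step (i). Each $r_k$ receives exactly two entering edges, one per tree, so your worry about a component needing three incoming edges is vacuous. Removing the $T_2$-colored edges turns every node of every $P_{2k}$ into a degree-two node, which contracts away and leaves exactly $T_1$ (symmetrically for $T_2$), so correctness of display is immediate rather than delicate.

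Two further points. First, your components should not be literal two-node singletons (reticulate node plus leaf): keeping the path nodes $P_{1k}$ and $P_{2k}$ inside $Q_k$ is what provides the tree-node sources for later reticulate edges and makes the verification trivial. Second, if you insist on the induction, your step (iv) glosses over a genuine pitfall: when re-attaching the removed leaf you must not create its two new source nodes by subdividing a \emph{reticulate} edge of the smaller network, since that would produce a tree node both of whose children are reticulate, destroying the tree-child property; the sources must be inserted on edges inside components (edges whose heads are tree nodes), and one has to argue such a position exists on the path representing the relevant tree edge. The direct path-decomposition construction sidesteps all of this.
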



\begin{figure}[t!]
\centering
\includegraphics[scale=1.0]{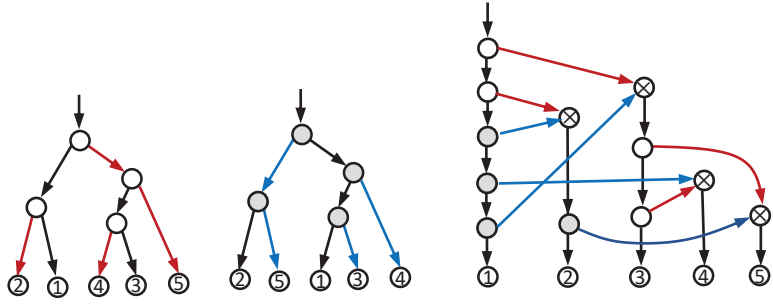}
\caption{Two trees $T_1$ (left), $T_2$ (middle) on four taxa (left) and the tree-child network (right) constructed in Theorem~\ref{theorem2Tree}, which displays the $T_1$ and $T_2$ simultaneously.
In $T_1$ (resp. $T_2$), the red (resp. blue) edges connect different paths of its decomposition. 
In each of the five tree components (vertical paths) of the network, the first node is its reticulate node; the unshaded vertices form the non-trivial paths appearing in the decomposition of $T_1$, while the shaded vertices form the non-trivial paths appearing in the decomposition of $T_2$. The red and blue reticulate edges correspond with the edges connecting different paths in the decomposition of $T_1$ and $T_2$, respectively.
\label{fig7_alg}
}
\end{figure}
\begin{proof}
Let $T_1$ and $T_2$ be two trees on $n$ taxa from $1$ to $n$.
We first decompose  $T_i$ into disjoints paths $P_{ik}$ ($1\leq k\leq n$) for $i=1, 2$ as follows. 
\begin{quote}
    1. $P_{i1}$ is the path consisting of the ancestors of leaf 1, together with the edges between them.
    
    2. For $k=2, \cdots, n$, $P_{ik}$ is the 
       the direct path consisting of the ancestors of leaf $k$ that do not belong to $\cup^{k-1}_{j=1}P_{ij}$ together with the edges between them.
\end{quote}
Let $p_i(k)$ be the parent of leaf $k$ in $T_i$. Note that $P_{i1}$ starts from the root of $T_i$ to $p_i(1)$. For $k\geq 2$, $P_{ik}$ is empty if $p_i(k)$ is in $\cup^{k-1}_{j=1}P_{ij}$ and non-empty otherwise. For example, for $T_1$ in Figure~\ref{fig7_alg}, $P_{11}$ is a 2-node path; $P_{12}$ is empty; $P_{13}$ is a 2-node path; and $P_{14}$ and $P_{15}$ are both empty.
We construct a tree-child network $N$ on $1$-$n$ with $n-1$ reticulate nodes (i.e. $n$ non-complex tree components) as follows. 

The first  component $Q_1$ of $N$ is obtained by connecting
$P_{11}$, $P_{21}$ and leaf 1 by edges (Figure~\ref{fig7_alg}). For $k>1$, 
the $k$-th component $Q_k$  is the concatenation of
a reticulate node $r_k$, $P_{1k}$, $P_{2k}$ and leaf $k$. Moreover, we connect the node that corresponds with the parent of 
the first node of $P_{ik}$ or $p_i(k)$ (if $P_{ik}$ is empty) to $r_k$ using (red or blue) edges for $i=1$  and $2$. In Figure~\ref{fig7_alg}, the red and blue reticulate edges are  added according to the path decomposition of $T_1$ and $T_2$, respectively.  

Since the edges not within a tree component are oriented from a node of  a tree component containing a leaf $i$ to the reticulate node of another tree component containing a leaf $j$ such that $i<j$, the resulting network is acyclic.  
It is easy to  see that the network is also tree-child. Moreover, 
 $T_1$ is obtained from $N$ if blue edges are removed and $T_2$ if red edges are removed.
\end{proof} 






\section{The tree-child hybridization number of multiple trees} 
\label{sec5}

The proof of Theorem~\ref{theorem2Tree} can be generalized to prove the existence of a tree-child network that co-display multiple trees in which each reticulate node is of indegree 2 or more. The minimum reticulate number of the tree-child networks that co-display a set of multiple trees is called their {\it tree-child hybridization number} \cite{van2022practical}. 

\begin{theorem}
\label{tcnumber}
The tree-child hybridization number always exists for any set of multiple trees.
\end{theorem}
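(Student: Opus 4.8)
The plan is to generalize the path-decomposition construction from the proof of Theorem~\ref{theorem2Tree} from two trees to an arbitrary set $\calT = \{T_1, \ldots, T_m\}$ of trees on the same $n$-taxon set, this time allowing each reticulate node of the constructed network to have indegree as large as $m$. Since the tree-child hybridization number is the minimum reticulate number taken over all tree-child networks that co-display $\calT$, it suffices to exhibit a single (possibly non-binary) tree-child network that co-displays $\calT$; the minimum then exists because it is taken over a non-empty set of non-negative integers, and is in fact bounded by the reticulate number of the network we build.

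First I would decompose each tree $T_i$ into disjoint paths $P_{i1}, P_{i2}, \ldots, P_{in}$ exactly as before: $P_{i1}$ is the path through the ancestors of leaf $1$, and for $k \geq 2$, $P_{ik}$ collects the ancestors of leaf $k$ not already placed in $\cup_{j<k} P_{ij}$. I would then assemble a network $N$ with $n$ tree components $Q_1, \ldots, Q_n$. The root component $Q_1$ is the concatenation of $P_{11}, P_{21}, \ldots, P_{m1}$ followed by leaf $1$; for $k > 1$ the component $Q_k$ begins with a single reticulate node $r_k$, followed by the concatenation $P_{1k}, P_{2k}, \ldots, P_{mk}$, and then leaf $k$. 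Finally, for each tree $T_i$ and each $k > 1$ I would add one entering edge of $r_k$ whose source is the network node realizing the parent (in $T_i$) of the top of $P_{ik}$, or $p_i(k)$ when $P_{ik}$ is empty. Because the node sets of the segments $P_{ik}$ are pairwise disjoint across all $i$ and $k$, these sources are distinct, so $r_k$ receives exactly $m$ entering edges, one per tree.

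The verification then splits into three checks. Acyclicity follows as in Theorem~\ref{theorem2Tree}: every entering edge of $r_k$ associated with $T_i$ originates in a segment $P_{ij}$ with $j < k$, so the components admit the topological order $Q_1, \ldots, Q_n$ with all reticulate edges oriented forward. The tree-child property holds because every reticulate node $r_k$ and every internal path node has, as its tree-edge child within its own component, a node of indegree one; permitting $r_k$ to have indegree $m \geq 2$ does not affect this condition. The display claim is that retaining, at each reticulate node, only the entering edge labelled by $T_i$ (and deleting the other $m-1$ entering edges) and then contracting degree-two nodes recovers exactly $T_i$.

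The main obstacle, and the only step that genuinely goes beyond the two-tree argument, is this final display check. I would argue that once the non-$T_i$ entering edges are deleted, every node of a ``foreign'' segment $P_{jk}$ with $j \neq i$ loses its single reticulate out-edge (which pointed to some reticulate node now keeping a different entering edge) and hence becomes a degree-two node inside its component; these foreign segments therefore contract away, so that the bottom of each $P_{ik}$ connects directly to leaf $k$ and to its off-path subtree through its retained $T_i$-edge. What survives is precisely the branching structure recorded by the $T_i$-labelled reticulate edges, namely $T_i$. Once this is in place, $N$ co-displays $\calT$ using $n-1$ reticulate nodes, so the tree-child hybridization number of $\calT$ is well defined and at most $n-1$.
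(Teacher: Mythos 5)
Your proposal is correct and follows essentially the same route as the paper's own proof: the identical path decomposition of each $T_i$, the same assembly of $n$ components $Q_k$ with indegree-$m$ reticulate nodes $r_k$, and the same acyclicity and tree-child checks. The only difference is that you spell out the display verification (foreign segments becoming degree-two and contracting away) explicitly, which the paper leaves implicit; this is a welcome addition but not a different approach.
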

\begin{proof}
Let $T_1, T_2, \cdots, T_m$ be a set of $m$ trees on $n$ taxa from $1$ to $n$.
We first decompose  $T_i$ into disjoints paths $P_{ik}$ ($1\leq k\leq n$) for each $i\leq m$ as follows. 
\begin{quote}
    1.) $P_{i1}$ is the path consisting of the ancestors of leaf 1 (including the root of $T_i$), together with the edges between them.
    
    2.)  For $k=2, \cdots, n$, $P_{ik}$ is the 
       the direct path consisting of the ancestors of leaf $k$ that do not belong to $\cup^{k-1}_{j=1}P_{ij}$ together with the edges between them.
\end{quote}
Let $p_i(k)$ be the parent of the leaf $k$ in $T_i$. Note that $P_{i1}$ starts from the root of $T_i$ to $p_i(1)$. For $k\geq 2$, $P_{ik}$ is empty if $p_i(k)$ is in $\cup^{k-1}_{j=1}P_{ij}$ and non-empty otherwise, as shown in Figure~\ref{fig7_alg}.
We construct a tree-child network $N$ on the taxa from $1$ to $n$ with at most $n-1$ reticulate nodes (i.e. $n$ non-complex tree components) as follows. 

The first  component $Q_1$ of $N$ is obtained by connecting
$P_{11}, P_{21}, \cdots, P_{m1}$ and the leaf 1 by edges (Figure~\ref{fig7_alg}). For $k>1$, 
the $k$-th component $Q_k$  is the concatenation of
a reticulate node $r_k$, $P_{1k}, P_{2k}, \cdots, P_{mk}$ and the leaf $k$. Moreover, we connect the node that corresponds with the parent of 
the first node of $P_{ik}$ or $p_i(k)$ (if $P_{ik}$ is empty) to $r_k$ using horizontal edges for each $i\leq m$. (In Figure~\ref{fig7_alg}, the red and blue reticulate edges are  added according to the path decomposition of $T_1$ and $T_2$, respectively).  

Since the edges not within a tree component are oriented from a node of  a tree component containing a leaf $i$ to the reticulate node of another tree component containing a leaf $j$ such that $i<j$, the resulting network is acyclic.  
It is easy to  see that the network is also tree-child. 
\end{proof} 

\noindent {\bf Remark} 
The indegree of each reticulate node in the train-child network constructed in the proof of Theorem~\ref{tcnumber} is $m$.
Thus, the number of the reticulate nodes of the network is unlikely the tree-child hybridization number of the trees.

\section{Discussions and conclusion}
We have contributed a few results on the co-display of multiple trees in tree-child networks. First, we have proved that there are at most $2n-3$ cherries in trees that can be co-displayed in a tree-child network on $n$ taxa. Additionally, we have proved that there is a taxon that is seen only in one or two cherries in the co-displayed trees. We also present several generalizations of these necessary conditions for co-displaying trees in a tree-child network. Second, we have proved that any two trees can always be co-displayed in a binary tree-child network.
and any set of multiple trees can always be co-displayed in a non-binary tree-child network.



\subsection{Algorithmic questions on tree-child networks}
Our study raises the problem of finding a tree-child network with the fewest reticulate nodes among those that can display two given trees. It is known that finding the most parsimonious \emph{unconstrained} phylogenetic networks for displaying two trees is NP complete \cite{SEMPLENPC07}. The computational complexity of finding the most parsimonious tree-child networks for two trees is unknown.
Another interesting algorithmic question is developing fast approaches for constructing parsimonious non-binary tree-child networks for co-displaying a given set of trees. For this problem, we are only aware of one existing approach \cite{van2022practical}, which appears to be slow for large data. 

The tree component decomposition is developed to study the tree containment problem for phylogenetic networks with the reticulation-visibility property 
\cite{Gunawan_16_IC}. It has be used to count galled networks and tree-child networks recently \cite{Cardona_20_JCSS,gunawan_galled}.
This study presents another application of the technique to tree-child networks. Can the technique be used to develop a polynomial time algorithm for determining the co-display of multiple trees in a tree-child network?  

\subsection{Mathematical properties of tree-child networks}
We have showed that for any integer $n$, there are a number of sets of three trees on $n$ taxa that cannot be co-displayed in a tree-child network. This leads to the question of estimating the percentage of 3-tree sets that cannot be co-displayed in any tree-child network. Another question is finding non-trivial sufficient conditions for the existence of tree-child networks with three or more trees.

\section*{Acknowledgments}
This work is partly supported by U.S. National Science Foundation grants CCF-1718093 and IIS-1909425 (to YW) and Singapore MOE Tier 1 grant R-146-000-318-114 (to LZ). The work was started while YW was visiting the Institute for Mathematical Sciences of National University of Singapore in April 2022, which was supported by grant R-146-000-318-114.

%
\bibliographystyle{elsarticle-harv}
\bibliography{Counting_My_references}

\end{document}